\def\EMAIL#1{\href{mailto:#1}{#1}}% When hyperref is used, otherwise outcomment 
\newlength\myindent
\DeclareRobustCommand*\cal{\@fontswitch\relax\mathcal}
\pgfplotsset{compat=1.15}
\tikzset{main node/.style={circle,fill=black!20,draw,minimum size=0.75cm,inner sep=0pt},
            }
\tikzstyle{vecArrow} = [thick, decoration={markings,mark=at position
\tikzstyle{innerWhite} = [semithick, white,line width=1.4pt, shorten >= 4.5pt]
\newcommand{\omrr}{OBMRR}
\newcommand{\ocs}{OCR}
\newcommand{\prev}{\texttt{prev}}
\newcommand{\E}{\mathbb{E}}
\newcommand{\sender}{\texttt{sender}}
\newcommand{\receiver}{\texttt{receiver}}
\DeclareMathOperator{\sel}{selected}
\DeclareMathOperator{\nsel}{not-selected}
\DeclareMathOperator{\un}{unknown}
\tikzset{node/.style={circle,fill=black!100,draw,minimum size=0.75cm,inner sep=0pt, text=white},
            }
\newtheorem{eg}{Example}[section]
\newtheorem{observation}{Observation}
\newcommand{\mb}[1]{\ensuremath{\boldsymbol{#1}}}
\newcommand{\onee}{\mathbbm{1}}
\DeclarePairedDelimiter{\ceil}{\lceil}{\rceil}
\def\alg{\textsc{PR}}
\definecolor{cornellred}{rgb}{0.7, 0.11, 0.11}
\definecolor{maroon}{rgb}{0.52, 0, 0}
\definecolor{dgreen}{rgb}{0.0, 0.5, 0.0}
\definecolor{ballblue}{rgb}{0.13, 0.67, 0.8}
\definecolor{royalblue(web)}{rgb}{0.25, 0.41, 0.88}
\definecolor{bleudefrance}{rgb}{0.19, 0.55, 0.91}
\definecolor{royalazure}{rgb}{0.0, 0.22, 0.66}
\newcommand{\RN}[1]{\textcolor{maroon}{ \textbf{/*} \textbf{Rad}: {#1} \textbf{*/}}}
\begin{document}
	%%%%%%%%%%%%%%%%
	
	% Outcomment only when entries are known. Otherwise leave as is and 
	%   default values will be used.
	%\setcounter{page}{1}
	%\VOLUME{00}%
	%\NO{0}%
	%\MONTH{Xxxxx}% (month or a similar seasonal id)
	%\YEAR{0000}% e.g., 2005
	%\FIRSTPAGE{000}%
	%\LASTPAGE{000}%
	%\SHORTYEAR{00}% shortened year (two-digit)
	%\ISSUE{0000} %
	%\LONGFIRSTPAGE{0001} %
	%\DOI{10.1287/xxxx.0000.0000}%
	
	% Author's names for the running heads
	% Sample depending on the number of authors;
	% \RUNAUTHOR{Jones}
	% \RUNAUTHOR{Jones and Wilson}
	% \RUNAUTHOR{Jones, Miller, and Wilson}
	% \RUNAUTHOR{Jones et al.} % for four or more authors
	% Enter authors following the given pattern:
	\RUNAUTHOR{Delong, Farhadi, Niazadeh, Sivan, Udwani}
	
	% Title or shortened title suitable for running heads. Sample:
	% \RUNTITLE{Bundling Information Goods of Decreasing Value}
	% Enter the (shortened) title:
	\RUNTITLE{Online Bipartite Matching with Reusable Resource}
	
	% Full title. Sample:
	% \TITLE{Bundling Information Goods of Decreasing Value}
	% Enter the full title:
	%\TITLE{}
	
	% Block of authors and their affiliations starts here:
	% NOTE: Authors with same affiliation, if the order of authors allows, 
	%   should be entered in ONE field, separated by a comma. 
	%   \EMAIL field can be repeated if more than one author
	\ARTICLEAUTHORS{%
		\AUTHOR{Steven Delong}
		\AFF{Google, \EMAIL{sdelong@google.com}}
		\AUTHOR{Alireza Farhadi}
		\AFF{Carnegie Mellon University, Computer Science, \EMAIL{farhadi94 gmail com}}
				\AUTHOR{Rad Niazadeh}
		\AFF{University of Chicago, Booth School of Business, \EMAIL{rad.niazadeh@chicagobooth.edu}}
				\AUTHOR{Balasubramanian Sivan}
		\AFF{Google Research, \EMAIL{balusivan@google.com }}
		\AUTHOR{Rajan Udwani}
	\AFF{UC Berkeley, IEOR, 
		\EMAIL{rudwani@berkeley.edu}}
		% Enter all authors
	} % end of the block
	
	\ABSTRACT{%
	We study the classic online bipartite matching problem with a twist: offline vertices, called resources, are \emph{reusable}. In particular, when a resource is matched to an online vertex it is unavailable for a deterministic time duration $d$ after which it becomes available again for a re-match. Thus, a resource can be matched to many different online vertices over a period of time. 
 
While recent work on the problem have resolved the asymptotic case where we have large starting inventory (i.e., many copies) of every resource, we consider the (more general) case of \emph{unit inventory} and give the first algorithms that are provably better than the na\"ive greedy approach which has a competitive ratio of (exactly) 0.5. 

Our first algorithm, which achieves a competitive ratio of $0.589$,  generalizes the classic RANKING algorithm for online bipartite matching of non-reusable resources~\citep{kvv}, by \emph{reranking} resources independently over time. While reranking resources frequently has the same worst case performance as greedy, we show that reranking intermittently on a periodic schedule succeeds in addressing reusability of resources and performs significantly better than greedy in the worst case. Our second algorithm, which achieves a competitive ratio of $0.505$, is a primal-dual randomized algorithm that works by suggesting up to two resources as candidate matches for every online vertex, and then breaking the tie to make the final matching selection in a randomized correlated fashion over time. As a key component of our algorithm, we suitably adapt and extend the powerful technique of online correlated selection~\citep{fahrbach2020edge} to reusable resources, in order to induce negative correlation in our tie breaking step and to beat the competitive ratio of $0.5$. Both of our results also extend to the case where offline vertices have weights.

% Nothing better than a $0.5$-approximation, obtained by the trivial deterministic greedy algorithm, was known for this problem. We give the first approximation factor beating $0.5$, namely a $0.505$ approximation, by suitably adapting and interpreting the powerful technique of Online Correlated Selection.

%	We consider a generalization of the vertex weighted online bipartite matching problem where the offline vertices, called resources, are \emph{reusable}. In particular, when a resource is matched it is unavailable for a deterministic time duration $d$ after which it becomes available for a re-match. Thus, a resource can be matched to many different online vertices over a period of time. 
%	
%	While recent work on the problem has resolved the asymptotic case where we have large starting inventory (i.e., many copies) of every resource, we consider the (more general) case of \emph{unit inventory} and give the first algorithms that are provably better than the na\"ive greedy approach which has a competitive ratio of (exactly) 0.5. %In particular, we achieve a competitive ratio of 0.589 against an LP relaxation of the offline problem. Our algorithm generalizes the classic Ranking and Perturbed Greedy algorithms for online matching, by \emph{reranking} resources over time. While reranking resources frequently has the same worst case performance as greedy, we show that reranking \emph{intermittently} on a \emph{periodic schedule} succeeds in addressing reusability of resources and performs significantly better than greedy in the worst case. % Enter your abstract
	}%
	
	% Sample
	%\KEYWORDS{deterministic inventory theory; infinite linear programming duality; 
		%  existence of optimal policies; semi-Markov decision process; cyclic schedule}
	%\MSCCLASS{Primary: 90B05; secondary: 90C40, 90C90}
	%\ORMSCLASS{Primary: Inventory/production: deterministic multi-item;
		%  secondary: dynamic programming/optimal control: deterministic 
		%  semi-Markov; programming: infinite dimensional}
	%\HISTORY{Received November 20, 2003; revised March 8, 2004, and March 26, 2004.}
	
	% Fill in data. If unknown, outcomment the field
	\KEYWORDS{Online bipartite matching, reusable resources, primal-dual analysis, online correlated selection, competitive ratio.}
% 	\MSCCLASS{X}
% 	\ORMSCLASS{Primary: ; secondary: }
% 	\HISTORY{X}
	\TITLE{Online Bipartite Matching with Reusable Resources}
	
	\maketitle
	%%%%%%%%%%%%%%%%%%%%%%%%%%%%%%%%%%%%%%%%%%%%%%%%%%%%%%%%%%%%%%%%%%%%%%
	
	% Samples of sectioning (and labeling) in MOOR.
	% NOTE: (1) all section levels end with a period,
	%       (2) capitalization is as shown (sentence style, not title style).
	%
	%\section{Introduction.}\label{intro} %%1.
	%\subsection{Duality and the classical EOQ problem.}\label{class-EOQ} %% 1.1.
	%\subsection{Outline.}\label{outline1} %% 1.2.
	%\subsubsection{Cyclic schedules for the general deterministic SMDP.}
	%  \label{cyclic-schedules} %% 1.2.1
	%\section{Problem description.}\label{problemdescription} %% 2.
\newif\ifNotes
\Notesfalse

\ifNotes
 \RN{I updated the notes below and left some comments for the next person who takes a pass}
 
 \smallskip
	{\color{red} 
		Notes for the merger
		\begin{itemize}
\item (New?) Title and author affiliations \RN{done}
\item New/modified abstract \RN {done}
\item New Introduction  \RN{done. the new introduction is basically roadmap; we now have two mini introductions at the beginning of sections 3 and 4}
\item Model and notation from OCR paper. \RN{done (still need to add some more details to the model based on Rajan's requests.)}
\item Some requests for notation.
\begin{itemize}
	\item Can we state in the model that ``arrivals" and ``requests" are used interchangeably? \RN{done}
	\item Can we say that $t$/$j$ represent an online vertex as well as its arrival time and simply write $t\in U$ instead of $t\in[1,|U|]$. {\color{blue} done}
	\item (after making the change) I realized that the lower limit $\max\{j-d+1,1\}$ makes some of the equations bulge out in a weird way. Could we write $j-d+1$ for simplicity and explain somewhere that, for example, we can add dummy arrivals before arrival 1? {\color{blue} letting this one go}
\end{itemize}
\item Union of "Related Works" section and merge bib files, do duplicate labels in bib need to be sorted? \RN{done. would be great if someone checks to make sure we have no duplicate entries.} {\color{blue} Rajan: I did another check}
\item Merge primal-dual introduction and modify references in analysis accordingly \RN{done}
			\item Modified conclusion \RN{done}
			\item Merge Acknowledgments {\color{blue} I turned all acknowledgements into a generic ``authors thank". Please feel free to change if desired.}
			\item \RN{The new technical layout merges the two technical sections. Would be great if someone double checks everything. Sections 2,3,4 should now be fully consistent.}
		\end{itemize}
	\hfill\\~\\
	Notation change tally, in case something seems off. \RN{Would be great if someone double checks this! }{\color{blue} Rajan: I did another check}
	\begin{itemize}
		\item $I\to V$
		\item $t\to j$, $T\to U$
		\item $\theta\to\alpha$, $it\to i,t$ or sometimes $i,u$
		\item $\lambda_t\to \beta_j$
		\item $\eta\to \Gamma$
		\item OM to OBM and OMR to OBMRR
		
	\end{itemize}
	}
\fi

\section{Introduction}
\label{sec:intro}
 A central task of a two-sided online marketplaces is to match demand to supply. In several applications, the supply is available at the beginning of the decision making horizon --- or equivalently, it is ``offline"--- while the demand arrives sequentially --- or equivalently, it is ``online''. The flagship example for this scenario is in sponsored search, where a market algorithm (for example Google Adwords) decides how to match the arriving search queries to advertisers who are willing to show their ads for that particular arriving query. As an important feature of this example, the advertisers' budgets are considered as \emph{non-reusable} resources: once a query is matched to an advertiser, the consumed budget of that advertiser is gone; in other words, it cannot be reused in the remaining of the decision-making horizon.

In contrast to the above applications, there are prevalent two-sided online marketplaces where the planner aims to match the arriving demand to rental resources. Rental resources are \emph{reusable}, simply because a given unit of such a resource may be re-allocated several times. Examples are virtual machines in cloud computing platforms such as AWS, listings in vacation rental or hospitality service online marketplaces such as Airbnb, and local professional services in online labor platforms such as Thumbtack. Reusable resources are also common in more traditional application domains. One example is in a city hospital where scarce medical resources need to be %managed properly. The hospital should 
allocated %its scarce medical resources related 
for a variety of patient treatments over time%and demand for different treatments
; most of these resources are released to be used again after a patient   leaves. This service operation was particularly important in early days of the COVID-19 pandemic, when hospitals had to carefully manage the allocation of their beds, ventilators, and ICU units to treat a rising influx of COVID patients. 

A common modeling theme in all of the above rental applications is that the planner's algorithm sequentially matches arriving requests to available compatible resources, where a resource is considered available only if it is not under rental at the time of the request. The main focus of this paper is to design and analyze such algorithms with the goal of matching as many demand requests as possible. The online bipartite matching (OBM) problem, originated from the seminal work of Karp, Vazirani and Vazirani~(\citeyear{kvv}), is a prominent model to capture a special case  where the offline resources can be only used once. Motivated by the rental applications, we revisit this classic problem by considering a variation where offline vertices are reusable, that is, once an offline vertex is used in the matching, it will be released to be reused after a certain duration of time. We refer to this problem as the \emph{online bipartite matching with reusable resources (OBMRR)}. This basic model was first introduced in \cite{reuse} -- together with its generalization to online assortment planning of reusable supply. Later it was studied more extensively in \cite{feng,feng3} under deterministic rental durations and in \cite{rba1,full} under stochastic i.i.d. rental durations --- but only in the regime where resources have large capacities (or equivalently fractional allocations are allowed). In \omrr, we study the integral allocation problem where there is exactly one unit of each offline reusable resource available for allocation.

\paragraph{\textbf{Problem Formulation}} Formally, an instance of the \omrr\ problem consists of a bipartite graph $G=(V,U;E)$, where $V$ is the offline side, $U$ is the online side, and $E$ is the set of edges. $V$ represents the set of reusable resources and $U$ represents the set of arriving (demand) requests. Also, an edge $(i,j) \in E$ indicates that resource $i$ can be assigned to request $j$ (also referred to as the arrival $j$). Vertices in $U$ arrive online over discrete time $j=1,2,\ldots,\lvert U\rvert$ in an (oblivious) adversarially chosen order, and reveal their incident edges to vertices in $V$ upon arrival. Once offline vertex $i$ is matched to an arriving online vertex $j$, it remains unavailable for the next $d-1$ periods and will be available for re-allocation upon arrival of online vertex $j+d$, where $d\in\mathbb{N}$ is a known parameter referred to as the rental or usage duration in the paper. 

We consider online algorithms that irrevocably match the arriving online vertex at each time to one of the available offline neighbours (or refuse to match that online vertex). The goal is to maximize the total number of matched online vertices at the end, which we also refer to as the size of the final matching.\footnote{Note that while each online vertex is only matched at most once, each offline vertex can be used multiple times due to reusability of resources.} As is common in the literature on online algorithms,  we compare our algorithm to the optimal offline algorithm that has the complete knowledge of $G$. We say that an algorithm is $\Gamma$-competitive for $\Gamma\in[0,1]$ if for every instance of the problem the expected size of the matching returned by the algorithm is at least $\Gamma$ times the size of the matching produced by the optimal offline solution. We refer to $\Gamma$ as the \emph{competitive ratio}. Important to recall, what makes our problem substantially different from the setting in \cite{kvv} is that each resource $i$ is available at time $j$ if and only if it had not been matched at some time $j'\in[j-d+1,j-1]$. In fact, OBM is a special case of OBMRR when $d=\lvert U \rvert$. %While the focus of our paper is mostly on the unweighted bipartite matching setting, 

As mentioned earlier, in this paper we are interested in \emph{integral} online matching algorithms under reusable resources. In contrast to fractional online algorithms that can divide the arriving online vertex and allocate it fractionally to offline vertices, we force the online algorithm -- which can be randomized -- to match the arriving online vertex in full to an offline vertex (or leave it unmatched). We do not assume offline vertices have large budgets; instead, we let every offline vertex to have a unit budget (inventory capacity) as in the classic OBM problem. This is without loss of generality; an instance where resources have non-unit budgets can be transformed into an instance of the unit budget setting (see Proposition 1 in \cite{reuse}). From a practical standpoint, the unit budget setting captures scenarios where the reusable resources have small initial inventories such as in the AirBnB or Thumbtack applications mentioned earlier. This is in contrast to prior work~\citep{feng,feng3,rba1,full} which either consider fractional allocations or the equivalent formulation that asks for integral allocations but with large budgets.

Finally, we highlight that we show our results for the (more general) vertex-weighted setting. In this setting, each offline vertex $i$ is also associated with a non-negative vertex weight $r_i$ and an algorithm for this problem obtains a reward of $r_i$ each time the offline node $i$ gets matched. The goal of both the online algorithm and the optimum offline solution is to maximize the total accumulated weight from matching online nodes. The special case of this model when $d=\lvert U\rvert$ is the non-reusable vertex-weighted online bipartite matching problem introduced and studied in \cite{goel}.

\paragraph{\textbf{Main Contributions}} Any greedy algorithm always matches as many as half of the online vertices matched by the optimum offline solution in the unweighted  setting.\footnote{This folklore result is true for exactly the same reason as a maximal matching in a graph having a cardinality at least half of the size of the maximum matching.} Similarly, the naive greedy algorithm that matches the arriving online vertex to the available offline vertex with maximum weight obtains half of the total weight of the optimum offline solution in the vertex-weighted setting. No deterministic integral algorithm can beat this $0.5$ competitive ratio, which is even true in the special case of unweighted non-reusable resources. In this special case, the elegant RANKING algorithm in \cite{kvv} uses a randomized (negatively) correlated tie breaking rule for greedy matching through one uniform random permutation over offline vertices; the resulting structured correlation improves the competitive ratio from $0.5$ to the optimal  $\left(1-1/e\right)\approx 0.63$. For the vertex-weighted setting with non-reusable resources, \cite{goel} extend RANKING by proposing the Perturbed Greedy (PG) algorithm and obtain the optimal $\left(1-1/e\right)$ competitive ratio. At the same time, \cite{feng,feng3,rba1,full} consider the fractional  \omrr\ problem (with vertex weights) and show adaptations of the classic BALANCE algorithm --- introduced first in \cite{pruhs} for the non-reusable online bipartite b-matching and later generalized to the Adwords problem in \cite{msvv} --- obtain the optimal $\left(1-1/e\right)$ competitive ratio.\footnote{Notably, \cite{full} considered the general case of stochastic usage durations and showed that BALANCE is at most $0.626$ ($<1-1/e$) competitive. They 
proposed an adaptation of BALANCE that achieves the best possible guarantee of $(1-1/e)$ for arbitrary usage distributions when capacities are large (or equivalently, when fractional allocations are allowed).} This progress leaves the open question of finding the best competitive ratio achievable by an integral online algorithm in the \omrr\, problem. In fact, no algorithm with guarantee better than 0$.5$ was known prior to our work. In this paper, we take a first step towards answering this open question by studying the following fundamental question. 

\smallskip
\begin{displayquote}
\emph{Does there exist an integral (randomized) online algorithm that can beat the naive greedy algorithm in our problem? In other words, can we obtain a constant competitive ratio $\Gamma$ strictly larger than $0.5$ for the (vertex-weighted) online bipartite matching with reusable resources?}
\end{displayquote}
\smallskip

 Our main result answers the above question in affirmative. In particular, we propose two different algorithmic approaches for designing online algorithms whose competitive ratios are \emph{strictly} better than $0.5$. In a nutshell, our first approach is based on modifying the RANKING algorithm of \cite{kvv} (and the perturbed greedy of \cite{goel}) to deal with challenges unique to the matching of reusable resource. Our second approach is based on extending the concept of online correlated selection (OCS), introduced and studied in \cite{fahrbach2020edge}, to the case of reusable resources. We refer to this extended procedure as online correlated rental (\ocs\,). We then design a primal-dual algorithm that uses this algorithmic construct as a sub-routine. Both approaches eventually lead to structured ways of correlating the randomized tie-breaking rule of the greedy algorithm across different rounds. Both approaches also rely on the primal-dual analysis framework to quantify the effect of these structures on improving the competitive ratio of the naive greedy algorithm.  By incorporating these two different approaches, we propose two final algorithms, namely \emph{Periodic Reranking} (\Cref{pr}) and \emph{OCR-based Primal-Dual} (\Cref{alg:primal-dual}). Our main technical results are the following two theorems, establishing the first competitive ratio results that beat the competitive ratio $0.5$ of the naive greedy algorithm for the (vertex-weighted) online bipartite matching of reusable resources.

\smallskip

% 	For the OBM problem, \cite{kvv} proposed the Ranking algorithm that randomly ranks resources at the beginning and then matches each arrival to the best ranked neighbor that is available. They showed that Ranking is $(1-1/e)$ competitive for OBM and this is the best possible guarantee for any online algorithm. For the OBMRR problem, 
% no algorithm with guarantee better than 0.5 was known prior to this work. We establish the following result for this problem.
	\begin{theorem}[Main Result I] \label{ref}
	The Periodic Reranking algorithm (Algorithm \ref{pr} with $\beta=0.89$), achieves a competitive ratio of $\Gamma\approx 0.589$ in the (vertex-weighted) \omrr\, problem.
\end{theorem}

	\begin{theorem}[Main Result II] \label{ref2}
	The OCR-based Primal-Dual algorithm (Algorithm~\ref{alg:primal-dual} with the OCR subroutine as described in \Cref{alg:ocs-sym}), achieves a competitive ratio of $\Gamma\approx 0.505$ in the (vertex-weighted) \omrr\, problem.
\end{theorem}

We leave the discussions around the details of our technical contributions including the details of our algorithmic constructs and our proof techniques to \Cref{sec:reranking} (first approach) and \Cref{sec:ocr-primal-dual} (second approach). In Appendix \ref{appx:cont}, we consider a variant of our setting where arrival times are continuous. Our main results extend as is to this setting too.

In what follows we elaborate on some further related work to help  positioning our results better in the literature.

	\subsection{Further Related Work}\label{sec:prev}

\medskip
Our work relates to and contributes to several streams of literature in operations research and computer science.

\smallskip
\noindent\emph{Online bipartite allocations.} Our results fit into the rich literature on online bipartite matching (with vertex arrival) and its extensions. Besides the work mentioned earlier, the analysis of RANKING was later clarified and considerably simplified by \cite{baum} and \cite{goel2}. Another line of work closely related to our problem studies the online assortment optimization with non-reusable resources~\citep[e.g.,][]{negin,ma2020algorithms} and reusable resources~\citep[e.g.,][]{reuse,feng,feng3,full} in the adversarial setting, in which the online algorithm decides on subsets of resources to display to an adversarial stream of arriving consumers. Several of these work introduce and analyze variants of the ``inventory balancing" algorithm --- which are all inspired by the BALANCE algorithm and its analysis in the seminal work of \cite{msvv} for the Adwords problem. In addition to these settings, there is a vast body of work on online matching and (non-reusable) resource allocation in stochastic and hybrid/mixed models of arrival. For a comprehensive review of these work,
see \cite{survey}. Last but not least, \cite{moharir2015online} introduced a unit inventory model where  %reusable resources where 
the arrival sequence is divided into slots and resources are reusable with a (deterministic) usage duration of one slot. At the beginning of each slot, arrivals are sequentially revealed within a short (infinitesimal) amount of time. The number and types of arrivals in a slot is arbitrary. 
A resource can be matched to at most one arrival in each slot and 
a resource matched in slot $t$ is available for rematch in slot $t+1$.
Consequently, 
the decision across slots are independent and it can be shown that the classic RANKING algorithm is still $(1-1/e)$ competitive. \cite{moharir2015online} proposed several new algorithms, including a $(1-1/e)$ competitive reranking algorithm that samples a new rank for the resources at the beginning of each slot. In fact, they show this result for the more general case where arrivals have heterogeneous match deadlines. Overall, their setting and results are incomparable to ours.

\smallskip
\noindent\emph{Primal-dual framework.}  Our randomized dual-fitting analysis borrows some aspects of the randomized primal-dual analysis of RANKING (and its extensions) discovered  in \cite{devanur}. An economics interpretation of this primal-dual analysis is also studied recently in \cite{eden2021economics}. At high-level, our primal-dual analysis also bears some resemblance with the primal-dual framework in the Adwords problem~\citep{buchbind,msvv}, online fractional matching with free-disposal~\citep{displayad}, the online bipartite matching with concave returns on the offline nodes~\citep{devanur2012online}, and the modified primal-dual framework for Adwords with random permutations~\citep{devhay}. Our dual-fitting style of analysis shares similarities with ~\cite{esfandiari2015online,huang2020fully2}. Another related recent line of work is on multi-stage bipartite allocations and matchings, where similar-in-spirit primal-dual analysis frameworks are discovered using ideas from convex programming duality, e.g., \cite{feng2020batching,feng2021two}. 

\smallskip
\noindent\emph{Online correlated selection.} As mentioned earlier, the concept of online correlated selection was first introduced in \cite{fahrbach2020edge} and subsequently used in \cite{huang2020adwords}. Furthermore, there are very recent work on improving this technique and extending it to the more general setting of multi-way online correlated selection~\citep{blanc2022multiway,gao2022improved}. Importantly, all of these work consider the non-reusable version of the problem while our technical developments in \Cref{sec:OCR} consider resources that are reusable.

\smallskip
\noindent\emph{Other models.} We note that a parallel stream of work considers the online bipartite allocations of reusable (and also non-reusable) resources in a Bayesian setting, both for the online prophet inequality matching~\citep[e.g.,][]{alaei,dickerson} and its extension to online Bayesian assortment optimization~\citep[e.g.,][]{RST18, baek, ma2021dynamic,feng2}. In this setting, the arrival type (which encode rewards and usage durations) are drawn from known distributions which can be varying across time. For a detailed review of these settings, see \cite{reuse} and \cite{feng2}. The model and the results in this line of work is incomparable to ours due to the fundamental difference between adversarial arrival and Bayesian arrival. 
Some other problems that are indirectly related to us are  (i) online bipartite stochastic matching~\citep[e.g.,][]{feldman,bahman,vahideh,mehta,jaillet}, (ii) online bipartite matching with stochastic rewards~\citep[e.g.,][]{deb,stochrew,huang} and (iii) the rich literature on stochastic i.i.d. online packing LP and convex programming with large budgets~\citep[e.g.,][]{FMMS-10,DJSW-11,AD-14}.

% we present the online correlated rental problem with adversarial queries. We present our first OCR algorithm in \Cref{sec:OCR-symmetric} and and prove its OCR guarantee.
% % We then extend our OCR guarantee to general durations in \Cref{sec:OCR-general} and present a slight generalization of the previous algorithm for this problem.
% We then present our OCR-based primal-dual framework in \Cref{sec:primal-dual}. We start by formalizing the LP relaxation in \Cref{sec:LP}. We then describe primal-dual algorithm for the OBMRR problem in \Cref{sec:primal-dual-alg} with the construction of the dual variables, and we finally show how the OCR guarantee helps with the final primal-dual proof in \Cref{sec:primal-dual-analysis}.

\paragraph{\textbf{Organization}} In \Cref{sec:LP}, we start by formalizing the LP relaxation  of the \omrr\ problem. We then present our first algorithm based on the idea of periodic reranking in \Cref{sec:reranking}. We provide a primal-dual analysis for this algorithm in \Cref{sec:reranking-analysis}. We then switch to our second algorithm based on the idea of online correlated selection for reusable resources in \Cref{sec:ocr-primal-dual}. We first present the idea of OCR in \Cref{sec:OCR}. We then describe our OCR-based primal-dual algorithm in \Cref{sec:primal-dual-alg} with the construction of the dual variables, and we finally show how the OCR guarantee helps in analyzing the competitive ratio using a primal-dual proof in \Cref{sec:primal-dual-analysis}.

\section{LP Relaxation and Primal-Dual Framework}
\label{sec:LP}
 Our algorithms and their analyses rely on the linear programming relaxation of \omrr. In this LP relaxation, we associate a variable $x_{i,j}$ to every edge $(i,j) \in E$ of the graph. Since every online vertex $j$ should be matched to at most one offline vertex, the summation of $x_{i,j}$ over edges incident to $j$ should be at most $1$. On the other hand, offline vertices are reusable and we know that if an offline vertex $i$ gets matched, it will be available for re-allocation after $d$ time units. This is equivalent to saying that an offline vertex $i$ should be matched at most once in every time interval of size of $d$. For every offline vertex $i$ we have a set of constraints in the LP relaxation which states that the summation of the edges incident to $i$ in any interval of length $d$ should be at most $1$. The primal and dual linear programs of \omrr\ are given below.
 %in Figure \ref{fig:primal-dual}.

\begin{figure}[h]
\begin{align*}
	\begin{array}{lll|lll}
	\multicolumn{3}{c|}{\text{Primal } (P)} & \multicolumn{3}{c}{\text{Dual } (D)} \\ \hline  
	\max & \sum_{(i,j) \in E} r_i\cdot x_{i,j} & s.t. &
	\min & \sum_{i \in V, t \in [1,|U|]} \alpha_{i,t} + \sum_{j \in U} \beta_{j} & s.t.\\
	\forall j \in U: & \sum_{i \in V} x_{i,j} \le 1 & & \forall (i,j) \in E:& \beta_j + \sum_{t=\max\{j-d+1,1\}}^{j} \alpha_{i,t} \ge r_i &\\
	\forall i \in V, t \in [1,|U|]: & \sum_{j \in [t,t+d-1]} x_{i,j} \le 1 & & \forall i \in V, t \in [1,|U|]: & \alpha_{i,t} \ge 0 & \\
	\forall i \in V,j \in U:& x_{i,j} \ge 0 &  & \forall j \in U: & \beta_j \ge 0 &\\
	\end{array} %\label{LP_q}
\end{align*}
\vspace{-0.5cm}
%	\caption{Primal and Dual programs of \omrr.}
	%\label{fig:primal-dual}
\end{figure}

It is interesting to mention that unlike the LP for the bipartite matching with non-reusable resources, the above LP relaxation above is not necessarily integral (and hence it is a strict relaxation to the optimum offline benchmark).
\begin{remark}
\label{lp-gap}
The integrality gap of the LP relaxation of \omrr\ is at least $\frac{7}{6}$. Consider the graph below, and assume that $d=3$. It is easy to see that the size of the optimal matching is $3$. However, we can get a feasible fractional solution of size $3.5$ by setting $x_{i,j}$ to $0.5$ for every edge.

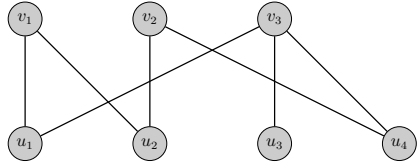
\begin{figure} [h]
    \centering
    \scalebox{.6}{\begin{tikzpicture}
    \node[main node] (v_1) {$v_1$};
    \node[main node] (v_2) [right = 2cm  of v_1]  {$v_2$};
    \node[main node] (v_3) [right = 2cm  of v_2]  {$v_3$};
    \node[main node] (u_1) [below = 2cm  of v_1] {$u_1$};
    \node[main node] (u_2) [right = 2cm  of u_1] {$u_2$};
    \node[main node] (u_3) [right = 2cm  of u_2] {$u_3$};
    \node[main node] (u_4) [right = 2cm  of u_3] {$u_4$};

    \path[draw,thick]
    (v_1) edge node {} (u_1)
    (v_1) edge node {} (u_2)
    (v_2) edge node {} (u_2)
    (v_2) edge node {} (u_4)
    (v_3) edge node {} (u_1)
    (v_3) edge node {} (u_3)
    (v_3) edge node {} (u_4)
    
    ;
\end{tikzpicture}}
    \caption{An example for the \omrr\ with an integrality gap of $\frac{7}{6}$. In this example  $V=\{v_1, v_2, v_3\}$ and $U=\{u_1, u_2, u_3, u_4\}$.}
    \label{fig:badexample-notintegral}
\end{figure}

\end{remark}

We use the dual-fitting primal-dual framework to provide competitive ratio guarantees for our algorithms. This is a versatile and general technique for proving guarantees for online matching and related problems. Throughout the paper, our algorithms always maintain a feasible primal solution. To do so, we let $x_{i,j}$ to be the probability that edge $(i,j)$ is in the matching generated by the algorithm. Then, the primal objective $P$ is equal to the expected objective of the algorithm. We also construct a \emph{dual certificate} for each algorithm. This dual will either be constructed only for the sake of the analysis (\Cref{sec:reranking-analysis}) or it will be maintained and used by the online algorithm itself (\Cref{sec:primal-dual}). The following lemma describes how the dual certificate helps with proving competitive ratios. The proof is available in Appendix~\ref{appx:missing}.

\begin{lemma}
\label{lem:primal-dual}
Suppose an online algorithm maintains a primal assignment $\{x_{i,j}\}$ with objective value $P$, and there is a dual assignment $\{\alpha_{i,t},\beta_j\}$ with objective value $D$ satisfying these conditions:
\begin{enumerate} [label=(\roman*)]
\smallskip
    \item \textbf{\textit{Reverse weak duality:}} $P \ge D$,
     \item \textbf{\textit{Approximate dual feasibility:}} there exists $\Gamma\in(0,1]$, so that
    $$\forall (i,j)\in E:~~\beta_j + \sum_{t=\max\{j-d_i+1,1\}}^{j} \alpha_{i,t} \ge \Gamma r_i.$$ 
\end{enumerate}
Then, the algorithm is $\Gamma$-competitive.
\end{lemma}

	\section{First Approach: Periodic Reranking }
	\label{sec:reranking}
	In this section, we provide the details of our first approach based on a natural extension of the classic Ranking algorithm of \cite{kvv}, and propose the \emph{Periodic Reranking (PR)} algorithm (\Cref{pr}). In \Cref{sec:reranking-analysis}, we establish a lower bound on the competitive ratio of this algorithm using a primal-dual analysis.
	
	The main new ideas in our algorithm are (i) using reranking, i.e., drawing fresh randomized ranks occasionally, and (ii) reranking on a periodic schedule every $d$ time units, where $d$ is the usage duration of resources. To gain insight into the usefulness of reranking and periodicity, consider the classic RANKING algorithm on the example below. % Consider the following example.
	%\smallskip
	\begin{eg}\label{ranker}
		\emph{	Consider a setting with two reusable resources $\{1,2\}$, identical rewards, usage duration $d>0$ and four arrivals. The first and third arrivals have edges to both resources. The second arrival only has an edge to resource 2. The fourth arrival only has an edge to resource 1. The first two arrivals occur in close proximity to each other (less than $d$ time apart). The second and third arrivals are well separated in time (more than $d$ units apart). Finally, the last two arrivals also occur close to each other (similar to the first two). 
			Observe that the matching decisions at arrivals one and two have no impact on the availability of resources at arrival three. The RANKING algorithm will randomly rank the two resources. Since ranks are not changed, arrivals one and three are always matched to the same resource. Therefore, arrival two is matched if and only if arrival four is not matched. In contrast, the optimal match is obtained by ranking resource 1 over resource 2 for the first two arrivals and then \emph{reversing the ranks} for the remaining two arrivals. 
	} \end{eg}  
	
	The example does not give an upper bound on the overall performance of RANKING but illustrates a key difficulty in analyzing the performance of RANKING for reusable resources. In general, when the ranking is fixed, %for reusable resources, Ranking leads to scenarios where %if the online algorithm makes the 
	``right" matching decisions on early arrivals (matching the first arrival to resource 1), may imply %, it makes the 
	``wrong" decisions on later arrivals (matching the third arrival to resource 1).  Reranking provides a natural way to mitigate this analytical issue, as it untangles the the dependence between matching decisions for arrivals that are well separated across time and makes it tractable to analyze the resulting algorithm.
	At the same time, it is important to note that reranking should be done at an appropriate frequency. Consider the algorithm that reranks resources at every arrival. When vertex weights are identical, say $r_i=1\,\, \forall i\in V$, this algorithm is equivalent to the following randomized algorithm: Match every arrival (that can be matched) by sampling a resource uniformly randomly. This algorithm, called Random, is known to have worst case performance same as greedy even for non-reusable resources \citep{kvv}. 
	
	Now consider periodic reranking every $d$ units of time. First, it operationalizes the insight that a decision to match a resource to arrival $j$ does not affect the resource availability after arrival $j+d-1$. Second, within a period, this maintains the same rank and avoids the pitfall of frequent reranking.\footnote{The key insight is that in the time span of one usage duration, each resource can be matched at most once, presenting a scenario similar to non-reusable resources. Indeed, \alg\ reduces to RANKING when $d\geq T$.} We formalize the periodic reranking idea in \Cref{pr}.
	
	\smallskip
% 	The key insight is that in the time span of one usage duration, each resource can be matched at most once, presenting a scenario similar to non-reusable resources. Within a period, \alg\ maintains the same rank and avoids the pitfall of frequent reranking. 

% 	To mitigate this issue, periodic reranking every $d$ time units operationalizes the insight that a decision to match a resource to arrival $j$ does not affect the resource availability after arrival $j+d-1$. 

% 	In fact, by reranking every $d$ time units, \alg\ also operationalizes the insight that a decision to match a resource to arrival $j$ does not affect the resource availability after arrival $j+d-1$. %by reranking every $d$ time periods. %we hope to use reranking to 
% 	This untangles the 
% 	the dependence between matching decisions for arrivals that are well separated across time and makes \alg\ tractable to analyze. 
	
	\SetInd{0.5em}{0.5em}
	\begin{algorithm}[htb]
		\SetAlgoNoLine
		\textbf{Inputs:} Set of resources $V$, usage duration $d$, parameter $\beta$\; 
		Let $g(x)=e^{\beta(x-1)}$ and $S=V$\;
		\smallskip
		\textbf{Every $d$ time units:} Generate new i.i.d. ranks $y_i\sim U[0,1]\,\, \forall i\in V$\;
		\smallskip
		% Generate i.i.d.\ samples $y_i\sim U[0,1]\,\, \forall i\in V$\;
		\For{\text{every new arrival } $j$}{
			Update set $S$ by adding resources that returned since arrival $j-1$\;
			Match $j$ to $i^*=\underset{ i\in S,\, (i,j)\in E}{\arg\max}\quad r_i (1-g(y_i))$\;
			$S=S\backslash \{i\}$\;
		}	
		\caption{Periodic Reranking (PR)}
		\label{pr}
	\end{algorithm}
	\smallskip
	
	At the start of the planning horizon, the PR algorithm (independently) samples a random seed $y_i\in U[0,1]$, for every $i\in V$. Using this seed, and a monotonically increasing trade-off function $g:\mathbb{R}\to [0,1]$, the algorithm evaluates \emph{reduced prices} $r_i(1-g(y_i))\,\,\, \forall i\in V$. Observe that the reduced prices change over time. In particular, after every $d$ units of time, \alg\ samples %(independently) samples 
	new seeds for the resources. %Each seed is used to compute a reduced price for $d$ units of time. 
	Re-sampling over periods of length $d$ ensures that resources have a new seed every time they return to the system after a match. Given the reduced prices, \alg\ matches each arrival to an available neighbor with the highest reduced price at the moment of arrival. The name Periodic Reranking comes from the following observation. When rewards $r_i=1\,\, \forall i\in V$, due to the monotonicity of $g$, the algorithm is equivalent to reranking resources after every $d$ units of time and matching arrivals to the best ranked available neighbor.
	
	When resources are non-reusable, say $d=\lvert U \rvert$, \alg\ reduces to the Perturbed Greedy (PG) algorithm. 
	For the PG algorithm, \cite{goel} showed that choosing $g(x)=e^{x-1}$ leads to the best possible guarantee of $(1-1/e)$ for OBM with arbitrary rewards. In \alg, we consider the family of functions $g(x)=e^{\beta(x-1)}$ parameterized by $\beta>0$. Our analysis dictates the choice of $\beta$. In particular, $\beta =0.89$ optimizes the guarantee that can be achieved with our analysis.

	\subsection{Competitive ratio analysis using primal-dual}
	\label{sec:reranking-analysis}
	Our analysis relies on the randomized primal-dual framework of \cite{devanur}, which is based on constructing a randomized dual certificate. Before getting into the details of our construction, we define some required notations.
	
	\smallskip
	\noindent \textbf{Notations:} Recall that PR reranks in fixed periods of length $d$. Let $K=\ceil{|U|/d}$ denote the total number of periods and let $k(j)$ denote the period that contains arrival $j\in U$. 
	To ensure that $k(j)-1$ is well defined for every $j\in U$, we add a dummy period (of time interval $d$) prior to the first arrival. This period does not have any arrivals and simply ensures that $k(j)\geq 2$ for every arrival. Let $y^k_i$ denote the $k$-th seed of resource $i$. Note that $y^k_i$ is the seed of $i$ in period $k\in[K]$. 
	Let $Y$ denote the vector of all random seeds. Given a resource $i\in V$ and arrival $j\in U$, let $Y_{-i,j}$ denote the vector of all seeds except $y^{k(j)}_i$ and $y^{k(j)-1}_i$. In other words, $Y_{-i,j}$ captures all seeds except the seed of $i$ during periods $k(j)-1$ and $k(j)$. We use $E_{y^{k(j)}_i,\, y^{k(j)-1}_i}[\cdot]$ to denote expectation with respect to the randomness in seeds $y^{k(j)}_i$ and $y^{k(j)-1}_i$.

% 	\noindent \textbf{Dual certificate:} 
% 	%For $j\in U$, let $j(d)$ denote the last arrival in the time interval $j,j+d-1$. 
% 	Let \alg\ denote both the algorithm and its expected total reward. Now, suppose there exist non-negative values $\beta_j,\alpha_{i,j}$ such that,
	
% 	\begin{enumerate}[label=\roman*)]
% 	\item $ \alg \geq \sum_{j\in U}\beta_j + \sum_{i\in V,\, t\in U} \alpha_{i,t}.$ 
% 		\item$\beta_j+\sum_{t=\max\{j+d-1,1\}}^{j} %\onee(a(\tau)-a(t)\leq d)\,
% 		\alpha_{i,t}\geq\,\Gamma\, r_i,\quad \forall (i,j)\in E,$
% 	\end{enumerate}
% 	Then, by weak LP duality, we have that PR is $\Gamma$ competitive. 
	
	\noindent \textbf{Dual construction:} 
	In order to define our dual candidate, we first define random variables $\beta_j(Y)$, $\alpha_{i,j}(Y)$ and subsequently set $\beta_j=E_{Y}[\beta_j(Y)]$ and $\alpha_{i,j}=E_{Y}[\alpha_{i,j}(Y)]$. %Recall that $t(d)$ denotes the last arrival in the time interval $(a(t),a(t)+d]$, 
	Inspired by \cite{devanur}, we set $\beta_j(Y)$ and $\alpha_{i,j}(Y)$ as follows. 
\begin{itemize}
    \item Initialize all dual variables to 0.
    \item Conditioned on $Y$, for each match $(i,j)$ in PR set of assigned edge, let  
	\begin{eqnarray}
		\beta_j(Y)&= &r_i\left(1-g\left(y^{k(j)}_i\right)\right),\label{lambda}\\
			\alpha_{i,j}(Y)&= & r_i\, g\left(y^{k(j)}_i\right). \label{theta}
	\end{eqnarray}
\end{itemize}

%%	and increment $\alpha_{i,j}(Y)$ as follows,
%	\begin{equation}
%		\alpha_{i,j}(Y)=\alpha_{i,j}(Y)+ r_i\, g\left(y^{k(j)}_i\right). \label{theta}
%	\end{equation}
	
		\noindent \textbf{Dual certification:} Now we show our constructed dual satisfies \Cref{lem:primal-dual} for $\Gamma\approx 0.5893$.
	\begin{lemma}
		The dual candidate given by \eqref{lambda} and \eqref{theta} satisfies constraint (i) of \Cref{lem:primal-dual}.
	\end{lemma}
	\begin{proof}{Proof.}
		Let $\alg(Y)$ denote the matching output by PR given seed vector $Y$. From \eqref{lambda} and \eqref{theta}, we have,
		\[\beta_j(Y)+\alpha_{i,j}(Y)=r_i\quad \forall (i,j)\in \alg(Y).\]
		Summing over all edges in the matching $\alg(Y)$ completes the proof.
	
		\hfill\Halmos
	\end{proof}
	
	It remains to show that constraints $(ii)$ of \Cref{lem:primal-dual} hold for the desired value of $\Gamma$. For OBM, \cite{devanur} prove a stronger statement in terms of conditional expectations. We follow a similar strategy.
	
	\begin{lemma}
		Consider an edge $(i,j)\in E$ and seed $Y_{-i,j}$. Suppose that for the candidate solution given by \eqref{lambda} and \eqref{theta}, we have %$y_j\in[0,1]$ for all $j\neq i$, denoted $Y_{-i}$, we have
		\begin{equation}\label{dual2c}
			E_{y^{k(j)}_i,\, y^{k(j)-1}_i}\left[\beta_j(Y) + \sum_{t=\max\{j-d+1,1\}}^{j} 
			\alpha_{i,t}(Y) \,\big|\, Y_{-i,t}\right]\geq \Gamma\, r_i,
		\end{equation}
		
		for some value $\Gamma>0$. Then, constraint (ii) of \Cref{lem:primal-dual} is satisfied for edge $(i,j)\in E$  with the same $\Gamma$. % with $\alpha=(1-1/e)$.
	\end{lemma}
	\begin{proof} {Proof.}
		The lemma follows by taking expectation over $Y_{-i,j}$ on both sides of \eqref{dual2c}. 
		
		\hfill\Halmos
	\end{proof}
	
	%Given the lemma, 
	For a given value of seed $y^{k(j)-1}_i$, resource $i$ may be matched to an arrival in period $k(j)-1$ such that it is unavailable at $j$ for all values of seed $y^{k(j)}_i$.  %It is worth noting that a 
	Therefore, a stronger version of \eqref{dual2c} where we also fix $y_i^{k(j)-1}$ and consider a conditional expectation only w.r.t.\ random seed $y^{k(j)}_i$, does not hold for any non-trivial value of $\Gamma$. This necessitates an analysis where consider an expectation w.r.t.\ both $y^{k(j)-1}_i$ and $y^{k(j)}_i$. %Close examination of our subsequent analysis reveals that 
	In a hypothetical scenario where $i$ is available at $j$ for all values of $y^{k(j)-1}_i$, it can be shown that inequality \eqref{dual2c} holds for $\Gamma=(1-1/e)$ (when $\beta=1$). Of course, in reality, $i$ may not be available at $j$ for some values of $y^{k(j)-1}_i$ %, resource $i$ %may be matched just prior to $t$ in period $k(j)-1$, 
	%may be unavailable at $t$ 
	and this scenario leads to a lower value of $\Gamma$ (= 0.589) in our analysis.  % (following the analysis of \cite{devanur}). 

	We now prove \eqref{dual2c} for every edge $(i,j)\in E$ and seed $Y_{-i,j}$. To this end, fix an arbitrary edge $(i,j)$ and seed $Y_{-i,j}$.  To simplify notation, let $y^1=y^{k(j)-1}_i$ and $y^2=y^{k(j)}_i$. Further, let $\alg(y^1, y^2)$ denote the matching output by \alg\ given seeds $y^1,y^2$ for $i$ and with other seeds fixed according to $Y_{-i,j}$. Since $Y_{-i,j}$ is fixed, %so 
	for simplicity,
	let $\beta_j(y^1,y^2)$ denote $\beta_j(y^1,y^2,Y_{-i,j})$. Similarly, for $t\in U$, let $\alpha_{i,t}(y^1,y^2)$ denote $\alpha_{i,t}(y^1,y^2,Y_{-i,j})$. We also write the conditional expectation $E_{y^1,y^2}[\cdot \mid Y_{-i,j}]$ as $E_{y^1,y^2}[\cdot]$. 
	
	Let $S_t(y^1,y^2)$ denote the set of resources available at arrival $t\in U$ %when \alg\ is executed with seeds 
	in $\alg(y^1,y^2)$.
	Given $y^1\in[0,1]$, for every arrival $t \in U$, 
	define the critical threshold $y^c_t(y^1)$ as the solution to,
	\[r_i\left(1-g\left(y^c_t(y^1)\right)\right) = \max_{v\in S_t(y^1,1),\, (v,t)\in E}r_v\left(1-g\left(y^{k(t)}_v\right)\right). 	\]
	Due to the monotonicity of function $g$, there is at most one solution to this equation. If there is no solution, we let $y^c_t(y^1)=0$. At a high level, the critical threshold at arrival $t$ captures the highest reduced price at the arrival when resource $i$ is ``removed" in period $k(j)$ (achieved by setting $y^2=1$). Similar to the analysis of OBM (\cite{devanur}), this scenario serves as a foundation for establishing lower bounds on $E_{y^{1},\, y^{2}}\left[\beta_j(y^1,y^2)\right]$ and $E_{y^{1},\, y^{2}}\left[ \sum_{t=\max\{j-d+1,1\}}^{j} 
	\alpha_{i,t}(y^1,y^2)\right]$. 
	
	%is critical threshold when both $y^1=y^2=1$. 
	Recall that $k(j)$ is the period that contains $j$. 
	Let $p(j)=\{1,\cdots,j\}\cap k(j)$ denote the sub-interval of period $k(j)$ that includes all arrivals prior to (and including) $j$. We let $S_{p(j)}(y^1,1)=\cup_{t \in p(j)} S_t(y^1,1)$ i.e., $S_{p(j)}(y^1,1)$ denotes the set of all resources that are available at some point of time in interval $p(j)$. The next lemma gives useful lower bounds on %the expectations $E_{y^1,y^2}[
	$\beta_j(y^1,y^2)$ and $\sum_{t=\max\{j-d+1,1\}}^{j} \alpha_{i,t}(y^1,y^2)$, when resource $i$ is available at some point in the interval $p(j)$.
	
	\begin{lemma}\label{devlb}
		Given $y^1\in[0,1]$ such that $i\in S_{p(j)}(y^1,1)$, we have, 
		\begin{enumerate}[label=\alph*)]
			\item $\beta_j(y^1, y^2)\,\geq\,\beta_j(y^1,1)\, \geq\, r_i \left(1-g(y^c_j(y^1))\right)\quad \forall y^2\in[0,1]$.
			\item $\sum_{t=\max\{j-d+1,1\}}^{j} %\onee(a(t)-a(t)\leq d)\,
			\alpha_{i,t}(y^1,y^2) \, \geq\, \onee(y^2<y^c_j(y^1))\, r_i g(y^2)\quad \forall y^2\in[0,1].$
		\end{enumerate}
	\end{lemma}
	Since every resource is matched at most once within each period, the bounds in Lemma \ref{devlb} are quite similar to their counterparts in the classic OBM setting where resources are matched at most once \citep{devanur}. For a proof, see Appendix \ref{appx:missing}. 
	
	Next, we show a useful lower bound on %contribution (to \eqref{dual2c}) from dual variables 
	$\sum_{t=\max\{j-d+1,1\}}^{j} \alpha_{i,t}(y^1,y^2)$ when $i\not\in S_{p(j)}(y^1,1)$. %
	
	\begin{lemma}\label{unavail}
		Given $y^1\in[0,1]$ such that $i\not\in S_{p(j)}(y^1,1)$, we have, 
		$\sum_{t=\max\{j-d+1,1\}}^{j} \alpha_{i,t}(y^1,y^2)\geq  r_i g(y^1)\,\, \forall y^2\in[0,1]$.
	\end{lemma}
	\begin{proof}{Proof.}
		Given $i\not\in S_{p(j)}(y^1,1)$, we have that $i$ %is unavailable from the start of period $k(j)$ until (at least) time $a(t)$. Thus, $i$ 
		is matched in period $k(j)-1$ to an arrival $t'>j-d$. %Let $t'(d)$ denote the last arrival in the interval $(a(t'),a(t')+d]$. Since $a(t)<a(t')+d$, we have $t'(d)\geq t$. 
		From \eqref{theta}, %given match $(i,t')$, we set 
		we have $\alpha_{i,t'}(y^1, y^2)=r_i\, g(y^1)$. Thus, 
		\[\sum_{t=\max\{j-d+1,1\}}^{j} \alpha_{i,t}(y^1,y^2)\,=\, \alpha_{i,t'}(y^1, y^2) \, \geq\, r_i\, g(y^1)\quad \forall y^2\in[0,1].\]
		\hfill\Halmos
	\end{proof}
	
	Notice that Lemma \ref{devlb} and Lemma \ref{unavail} apply to mutually exclusive and exhaustive scenarios. %Lemma \ref{devlb} applies to the case where $i$ returns in period $k(j)$ at some time prior to $j$ ($i\in S_{p(j)}(y^1,1)$). On the other hand, Lemma \ref{unavail} applies to the case where $i$ is in use during the initial part of period $k(j)$ until at least time $j$ ($i\not\in S_{p(j)}(y^1,1)$). 
	%For every value of $y^1$, we are in one of the two scenarios. 
	Combining Lemma \ref{devlb}$(b)$ with Lemma \ref{unavail} gives a lower bound on $\sum_{t=\max\{j-d+1,1\}}^{j} \alpha_{i,t}(y^1,y^2)$ for all $(y^1,y^2)\in[0,1]^2$. In the proof of Lemma \ref{combine}, we turn this into a desired lower bound on the expectation $E_{y^1,y^2}\left[\sum_{t=\max\{j-d+1,1\}}^{j} \alpha_{i,t}(y^1,y^2)\right]$. 
	
	In the scenario where $i\in S_{p(j)}(y^1,1)$, Lemma \ref{devlb}$(a)$ lower bounds $\beta_j(y^1,y^2)$ as a function of the critical threshold $y^c_j(y^1)$. It remains to lower bound $\beta_j(y^1, y^2)$ when $i\not \in S_{p(j)}(y^1,1)$ and find convenient bounds on $y^c_j(y^1)$. To this end, Lemma \ref{y1} first gives a sharp characterization of the set of values of $y^1$ that lead to each scenario. Lemma \ref{connect} builds on this characterization to upper bound $y^c_j(y^1)$ in two crucial scenarios. Finally, Lemma \ref{combine} fills in the gaps and puts the various pieces together. The next lemma gives a structural result that will be used to prove Lemma \ref{y1}. 
	\begin{lemma}\label{early}
		Consider a value $z\in[0,1]$ such that for $y^1=z$, $i$ is matched to some arrival, say $t(z)$, in period $k(j)-1$. Then, for every $y^1\leq z$, $i$ is matched in period $k(j)-1$ to arrival $t(z)$ or an arrival that precedes it.
		
	\end{lemma}
	\begin{proof}{Proof.}
		Recall that except $y^1$ and $y^2$, all seeds are fixed. The value of $y^2$ does not affect the output of \alg\ in periods prior to $k(j)$. Similarly, the value of $y^1$ does not affect the matching prior to period $k(j)-1$. 
		Since every resource can be matched at most once during a single period, when $y^1=z$, $t(z)$ is the unique arrival matched to $i$ during period $k(j)-1$. 
		
		Now, let $r_i(y^1)=r_i(1-g(y^1))$ and consider the change in the matching during period $k(j)-1$ as we vary $y^1$ in the interval $(0,z)$. 
		Suppose there exists a value $y^1 = z'$, with $z'<z$, such that $i$ is not matched prior to $t(z)$ in period $k(j)-1$ (if no such value exists, we are done). Then, for $y^1=z'$, $i$ is available at $t(z)$ and the matching prior to $t(z)$ is identical to the matching when $y^1=z$. Hence, the set of resources available at $t(z)$ is identical for both values of $y^1$. Since $r_i(z')\geq r_i(z)$ (by monotonicity of function $g$ for $\beta>0$), $i$ must be matched to $t(z)$ when $y^1=z'$. This completes the proof.
		\hfill\Halmos
	\end{proof}
	\begin{lemma}\label{y1}
		There exists values $z_1, z_2 \in [0,1]$, such that $z_1\leq z_2$ and, % resource $i$ is:
		\begin{enumerate}[label=\alph*)]
			\item $i\in S_{p(j)}(y^1,1) %Available at some time $t\in  p(j)$, 
			\quad \forall y^1 \in (z_2,1]$ and $i$ is not matched to any arrival in period $k(j)-1$. %y^1 \in[0,z_1)\cup (z_2,1]$ 
			\item $i\not\in S_{p(j)}(y^1,1)%Unavailable at all times $t\in  p(j)$, for all
			\quad \forall y^1\in(z_1,z_2)$ and $i$ is matched to some arrival in period $k(j)-1$.  %varies from $[0,1]$
			\item $i\in S_{p(j)}(y^1,1) %Available at some time $t\in  p(j)$, 
			\quad \forall y^1 \in [0,z_1)$ and $i$ is matched to some arrival in period $k(j)-1$. 
		\end{enumerate}
	\end{lemma}
	\begin{proof}{Proof.}
		Recall that $S_{p(j)}(y^1,1)$ denotes the set of all resources that are available at some point of time in interval $p(j)$. Observe that the value of $y^2$ does not influence the scenario i.e., whether $i$ is in (or not in) $S_{p(j)}(y^1,1)$. %The main property that proves the lemma is as follows. 
		
		%	It suffices to show that there is at most one 
		Let $z_2\in[0,1]$ be the highest value %$z_2\in(0,1)$ 
		such that %for some $y^1>z$, $i$ is unavailable at every moment in $p(j)$ and 
		for $y^1=z_2$, $i$ is matched in period $k(j)-1$. Set $z_2=0$ if no such value exists. 
		From Lemma \ref{early}, we have that for every $y^1<z_2$, $i$ will continue to be matched in period $k(j)-1$ and, in fact, to (possibly) earlier arrivals. Thus, there is a unique threshold $z_2\in(0,1)$ such that $i$ is matched in period $k(j)-1$ for every $y^1\leq z_2$, and unmatched in period $k(j)-1$ for every $y^1>z_2$. If $i$ is unmatched in period $k(j)-1$, then $i\in S_{p(j)}(y^1,1)$. This gives us part $(a)$ of the lemma. 
		
		Next, let $z_1\in[0,z_2]$ be the highest value such that $i\in S_{p(j)}(y^1,1)$ for for $y^1=z_1$. In other words, $i$ returns from its match in $k(j)-1$ in time to be available at some arrival in $p(j)$. Set $z_1=0$ if no such value exists. From Lemma \ref{early}, for every $y^1< z_1$, $i$ is matched (possibly) even earlier in period $k(j)-1$. Therefore, $i\in S_{p(j)}(y^1,1)$ for every $y^1< z_1$. This corresponds to part $(c)$ of the lemma. 
		
		Finally, by definitions of thresholds $z^1$ and $z^2$, when $y^1\in(z_1,z_2)$, we have that $y^1$ is matched in period $k(j)-1$ but $i\not\in S_{p(j)}(y^1,1)$. This corresponds to part $(b)$. 
		\hfill\Halmos\end{proof}
	\begin{lemma}\label{connect} The following statements are true.
		\begin{enumerate}[label=\alph*)]
			\item For every $y^1\in(z_2,1]$, % and $y^2\in[0,1]$, %such that $i$ is available at some moment in $p(j)$, 
			we have $y^c_j(y^1)= y^c_j(1)$.
			\item For every $y^1\in(z_1,z_2)$, % and $y^2\in[0,1]$, %such that $i$ is available at some moment in $p(j)$, 
			we have $y^c_j(y^1)\leq y^c_j(1)$.
		\end{enumerate}
	\end{lemma}
	\begin{proof}{Proof.}
		From Lemma \ref{y1}$(a)$, we have that for every $y^1\in(z_2,1]$, resource $i$ is unmatched in period $k(j)-1$. Therefore, with $y^2$ fixed at 1, the matching output by \alg\ is identical for every value of $y^1>z_2$. This proves part $(a)$.
		
		Let $r_t(y^1,y^2)$ denote the reduced price of the resource matched to arrival $t\in U$ in the matching $\alg(y^1,y^2)$. Set $r_t(y^1,y^2)=0$ if $t$ is unmatched. To prove part $(b)$, fix an arbitrary value $y^1=z\in(z_1,z_2)$ and consider the matching $\alg(z,1)$. From Lemma \ref{y1}$(b)$, we have that $i$ is matched in period $k(j)-1$ but does not return prior to $j$. Let $t(z)$ denote the arrival matched to $i$ in period $k(j)-1$ %Observe that $r_{t(z)}(z,1)\geq r_{t(z)}(1,1)$. 
		and let $T'=\{t \in U \mid t(z)\,\leq\, t\,\leq\, j\}$. Since $t(z)\geq j-d+1$, every resource is matched to at most one arrival in $T'$. 
		Now, given that $g(x)$ is strictly increasing in $x$ (for $\beta>0$), to prove $y^c_j(z)\leq y^c_j(1)$, it suffices to show that % for every $t$, %$\forall\, y^1\in(z_1,z_2)$, and $y^2=1$, 
		$r_t(y^1,1)\geq r_t(1,1)\,\,\, \forall\, y^1\in(z_1,z_2),\,\, t\in T'$. Note that $1-g(1)=0$ for every $\beta$. Therefore, when $y^1=1$, the reduced price of arrival matched to $i$ is 0, same as if the arrival were unmatched. Combining this observation with the fact that \alg\ matches each arrival greedily based on reduced prices, the inequality $r_t(y^1,1)\geq r_t(1,1)$ follows from, 
		\begin{equation}\label{nest}
			S_t(1,1)\backslash\{i\}\subseteq S_t(y^1,1)\quad \forall y^1\in(z_1,z_2),\, t\in T'.
		\end{equation}
		
		We prove \eqref{nest} via induction over the set $T'$. The first arrival in $T'$ is $t(z)$. From Lemma \ref{early}, prior to $t(z)$, resource $i$ is not matched to any arrival in period $k(j)-1$ in the matching $\alg(1,1)$. Thus, $\alg(1,1)$ and $\alg(z,1)$ are identical prior to $t(z)$ and $S_{t(z)}(1,1)= S_{t(z)}(z,1)$. Now, suppose that \eqref{nest} holds for all arrivals $t< t'$, for some $t'\in T'$. We show that \eqref{nest} holds for arrival $t'$ as well.
		
		For the sake of contradiction, suppose there exists a resource $v\in S_{t'}(1,1)\backslash (S_{t'}(z,1)\cup \{i\})$. 
		Recall that $S_{t}(1,1)\backslash\{i\}\subseteq S_t(z,1)$ for all $t<t'$. Thus, $v$ is matched to arrival $t'-1$ in $\alg(z,1)$, where $t'-1>t(z)$. Since every resource is matched to at most one arrival in $T'$, we have, % $j\in S_{t'}(1,1)\backslash\{i\}$ and 
		$S_{t'}(1,1)\subseteq S_{t'-1}(1,1)$. Thus, $v\in S_{t'-1}(1,1)\backslash\{i\}\subseteq S_{t'-1}(z,1)$ i.e., in $\alg(1,1)$, resource $v$ is available but not matched to $t'-1$. This contradicts the fact that \alg\ matches greedily based on reduced prices. 
		
		\hfill\Halmos
		
	\end{proof}

	The next lemma combines all possible scenarios given in Lemma \ref{y1} to lower bound \eqref{dual2c}. Then, the proof of Theorem \ref{ref} follows via standard algebraic arguments. Let $g(x)=e^{\beta(x-1)}$ for some $\beta\in(0,1]$. Let $G(x)$ be the antiderivative of $g(x)$.
	
	\begin{lemma}\label{combine}
		%Let $g(x)=e^{-\beta (x-1)}$ and 
%		Let $g(x)=e^{\beta(x-1)}$ for some $\beta\in(0,1]$. Let $G(x)$ be the antiderivative of $g(x)$. 
		There exists values $z_1, z_2\in[0,1]$ with $z_1\leq z_2$, such that
		\begin{eqnarray}
			&&E_{y^1, y^2}\left[\beta_j(y^1,y^2) + \sum_{t=\max\{j-d+1,1\}}^{j} 
			\alpha_{i,t}(y^1,y^2)\right]\nonumber\\
			&&\quad \geq  r_i\, \Bigg[G(z_2)-G(z_1)+ (1-g(y^c_t(1)))(1-z_1)+(1-z_2)\left(G(y^c_t(1))-G(0)\right)%\nonumber\\
			%&&\qquad\qquad 
			+z_1(1-g(0))%+G(y')-G(0))
			\Bigg].\nonumber
		\end{eqnarray}
		
	\end{lemma}
	\begin{proof}{Proof.} Observe that for any random variable $X$ derived from $y^1, y^2$, we have,
		\[E_{y^1,y^2}[X] = (1-z_2)\, E_{y^1,y^2}\left[X\mid y^1>z_2\right]+(z_2-z_1)\, E_{y^1,y^2}\left[X\mid y^1\in(z_1,z_2)\right]+z_1\, E_{y^1,y^2}\left[X\mid y^1<z_1\right].\]
		We prove the main claim by establishing lower bounds on each of the three terms on the RHS. % given in Lemma \ref{y1}. 
		\smallskip
		
		\noindent \textbf{Case I:} $\mb{y^1>z_2}$.
		From Lemma \ref{y1}$(a)$, when $y^1>z_2$, $i$ is not matched in period $k(j)-1$. Thus, $\alg(y^1,y^2)=\alg(1,y^2)\,\,\,\forall y^1>z_2$ i.e., in this case, the matching does not change with $y^1$. From Lemma \ref{devlb}$(a)$ and Lemma \ref{connect}$(a)$, we have
		\[\beta_j(y^1,y_2)\geq r_i\,\left(1-g(y^c_j(1))\right)\quad \forall y^1\in(z_2,1],\, y^2\in[0,1].\]
		Taking expectation over randomness in $y^1, y^2$, we have 
		\[E_{y^1}\left[E_{y^2}[\beta_j(y^1,y^2) \mid y^1> z_2]\right]\geq E_{y^1}\left[r_i\, \left(1-g(y^c_j(1))\right)\right] = r_i\,\left(1-g(y^c_j(1))\right).\]
		Finally, from Lemma \ref{devlb}$(b)$ and Lemma \ref{connect}$(a)$, we have,
		\begin{eqnarray*}
			E_{y^1}\left[E_{y^2}\left[\sum_{t=\max\{j-d+1,1\}}^{j}\alpha_{i,t}(y^1,y^2)\mid y^1>z_2\right]\right] &\geq &E_{y^1}\left[E_{y^2}[\onee(y^2<y^c_j(1))\, r_i g(y^2)]\mid y^1>z_2\right],\\
			&= &E_{y^1}\left[ r_i\, \int_0^{y^c_j(1)} g(x) dx\mid y^1>z_2\right],\\
			&= &r_i\,(G(y^c_j(1))-G(0)). 
		\end{eqnarray*}
		
		\noindent \textbf{Case II:} $\mb{z_1<y^1<z_2}$. In this case, $i$ is not available in period $k(j)$ prior to arrival $t$ and the value of $y^2$ does not affect the matching until after arrival $t$. From Lemma \ref{unavail}, we have,
		\[E_{y^1,y^2}\left[\sum_{t=\max\{j-d+1,1\}}^{j} \alpha_{i,t}(y^1,y^2)\mid y^1\in(z_1,z_2)\right]\geq  r_i \int_{z_1}^{z_2} g(x)dx\,=\, r_i\, (G(z_2)-G(z_1)).\]%\quad \forall y^2\in[0,1].\]
		From Lemma \ref{connect}$(b)$, we have
		\[E_{y^1,y^2}\left[\beta_j(y^1,y^2)\mid y^1\in(z_1,z_2)\right]\,=\,E_{y^1}\left[\beta_j(y^1,1)\mid y^1\in(z_1,z_2)\right]\,\geq\, r_i\, \left(1-g(y^c_j(1))\right). \]
		
		\noindent \textbf{Case III:} $\mb{y^1<z_1}$. 
		In this case, $i\in S_{p(j)}(y^1,1)$. From Lemma \ref{devlb}$(a)$, we have
		\begin{equation}\label{c3l}
			E_{y^2}[\beta_j(y^1,y^2)\mid y^1<z_1]\geq r_i\, \left(1-g(y^c_j(y^1))\right).
		\end{equation}
		From part $(b)$ of Lemma \ref{devlb}, 
		\begin{equation}\label{c3t}
			E_{y^2}\left[\sum_{t=\max\{j-d+1,1\}}^{j}\alpha_{i,t}(y^1,y^2)\mid y^1<z_1\right] \geq E_{y^2}\left[\onee(y^2<y^c_j(y^1))\, r_i g(y^2)\mid y^1<z_1\right]\,=\, r_i\,\left(G(y^c_j(y^1))-G(0)\right).
		\end{equation}
		Combining \eqref{c3l} and \eqref{c3t}, we have,
		\begin{eqnarray*}
			E_{y^2}\left[\beta_j(y^1,y^2)+\sum_{t=\max\{j-d+1,1\}}^{j}\alpha_{i,t}(y^1,y^2)\mid y^1<z_1\right] &\geq &r_i\left[1-g(y^c_j(y^1))+G(y^c_j(y^1))-G(0)\right],\\
			&= &r_i\left[1-g(y^c_j(y^1))+\frac{1}{\beta}\left(g(y^c_j(y^1))-g(0)\right)\right],\\
			&\geq &r_i \min_{x\in[0,1]}\left[1-g(x)+\frac{1}{\beta}\left(g(x)-g(0)\right)\right],\\
			&= &r_i\, (1-g(0)).
		\end{eqnarray*}
		The first equality uses the fact that $G(x)=\frac{1}{\beta}g(x) + c$, where $c$ is some constant. The second equality follows from the fact that $h(x)=1-g(x)+\frac{1}{\beta}\left(g(x)-g(0)\right)$, is a non-decreasing function of $x$ for $\beta\in(0,1]$. Thus,
		
		\[E_{y^1}\left[	E_{y^2}\left[\beta_j(y^1,y^2)+\sum_{t=\max\{j-d+1,1\}}^{j}\alpha_{i,t}(y^1,y^2)\mid y^1<z_1\right]\right]\geq  r_i\, (1-g(0)). \]
		\smallskip
		
		\hfill\Halmos
		\end{proof}
	
	\begin{proof}{Proof of Theorem \ref{ref}.} 	Let 
		\[f(z_1,z_2,x)=G(z_2)-G(z_1)+ (1-g(x))(1-z_1)+(1-z_2)\left(G(x)-G(0)\right)	+z_1(1-g(0)).\]
		We show that, $\min_{0\leq z_1\leq z_2\leq 1,\, x\in[0,1]} f(z_1,z_2,x)> 0.589$ for $\beta=0.89$. Then, using Lemma \ref{combine} completes the proof. First, using the fact that $G(x)=\frac{1}{\beta}g(x)+c$, where $c$ is some constant, we have,
		\begin{eqnarray}
			f(z_1,z_2,x)
			%&&\quad =\frac{1}{\beta}\left(g(z_2)-g(z_1)\right)+ (1-z_1)\left(1-g(x)\right)+\frac{1-z_2}{\beta}\left(g(x)-g(0)\right)+z_1\left(1-g(x_2)+\frac{1}{\beta}\left(g(x_2)-g(0)\right)\right)\\
			&= &\frac{1}{\beta}\left(g(z_2)-g(z_1)\right)+ (1-z_1)\left(1-g(x)\right)+\frac{1-z_2}{\beta}\left(g(x)-g(0)\right)+z_1\left(1-g(0)\right)\nonumber\\
			&= &\frac{1}{\beta}\left(g(z_2)-g(z_1)\right)+1-\frac{g(0)}{\beta}\left(1-z_2+\beta z_1\right)+\frac{g(x)}{\beta}\left( 1-z_2+\beta z_1-\beta\right)\label{exp1}
			%&&\quad = \frac{1}{\beta}\left(g(z_2)-g(z_1)\right)+1-g(x)+\frac{1-z_2+\beta z_1}{\beta}\left( g(x)-g(0)\right)
		\end{eqnarray}
		
		To find the minimum of this function, consider the following cases.
		\smallskip
		
		\noindent \textbf{Case I: $\mb{1-z_2\geq \beta(1-z_1)}$.} In this case, \eqref{exp1} is minimized at $x=0$. Thus, %Substituting this value in \eqref{exp1}, we have
		\begin{eqnarray*}
			f(z_1,z_2,x)&\geq &\frac{1}{\beta}\left(g(z_2)-g(z_1)\right)+1-g(0)\\
			&\geq & 1-g(0)\, =\, 1-e^{-\beta}, 
		\end{eqnarray*}
		where we used the fact that $g(z_2)\geq g(z_1)$ for $\beta\geq 0$ and $z_2\geq z_1$. %and $g(\cdot)$ is monotonically increasing in $x$.
		\smallskip
		
		\noindent \textbf{Case II: $\mb{1-z_2< \beta(1-z_1)}$.} In this case \eqref{exp1} is minimized at $x=1$. Thus,
		\begin{eqnarray*}
			f(z_1,z_2,x)&\geq &\frac{1}{\beta}\left(g(z_2)-g(z_1)\right) +\frac{1-g(0)}{\beta}\left( 1-z_2+\beta z_1\right)\label{exp2}
		\end{eqnarray*}
		Observe that the function $-e^{\beta z_1}+c\, z_1$, where $c$ is some constant, is concave in $z_1$. Thus, \eqref{exp2} is minimized at $z_1=0$ or $z_1=\min\left\{z_2,1-\frac{1-z_2}{\beta}\right\}$. In fact, $z_2\geq 1-\frac{1-z_2}{\beta}$ for every $\beta\leq 1$. Therefore,
		\begin{eqnarray*}
			f(z_1,z_2,x)&\geq &\min\left\{\frac{1}{\beta}\left(e^{\beta(z_2-1)}-e^{-\beta}\right) +\frac{1-e^{-\beta}}{\beta}\left( 1-z_2\right),\,\, \frac{1}{\beta}\left(e^{\beta(z_2-1)}-e^{z_2-1}\right) +1-e^{-\beta}\right\}\\
			&\geq &\min\left\{\frac{1}{\beta}\left(e^{\beta(z_2-1)}-e^{-\beta}\right) +\frac{1-e^{-\beta}}{\beta}\left( 1-z_2\right),\,\, 1-e^{-\beta}\right\},
		\end{eqnarray*}
		where we used the fact that $e^{\beta(z_2-1)}-e^{z_2-1}\geq 0$ for $\beta\leq 1$ and $z_2\in[0,1]$. 
		
		Combining both cases, we have that
		\[	f(z_1,z_2,x)\geq \min\left\{\min_{z_2\in[0,1]}\frac{1}{\beta}\left(e^{\beta(z_2-1)}-e^{-\beta}\right) +\frac{1-e^{-\beta}}{\beta}\left( 1-z_2\right),\,\, 1-e^{-\beta}\right\}.\]
		The first term inside the minimum is the solution to a convex minimization problem. It is easy to (numerically) verify that when $\beta=0.89$, both terms are greater than $0.5893$, giving us the desired guarantee. We numerically tried different values of $\beta$ to arrive at the conclusion that $0.89$ is the best choice for $\beta$. %Considering different values of $\beta$, the minimum value is lower than $0.5893$.
		\begin{figure}[htbp]\label{plot}
			\includegraphics[scale=0.4]{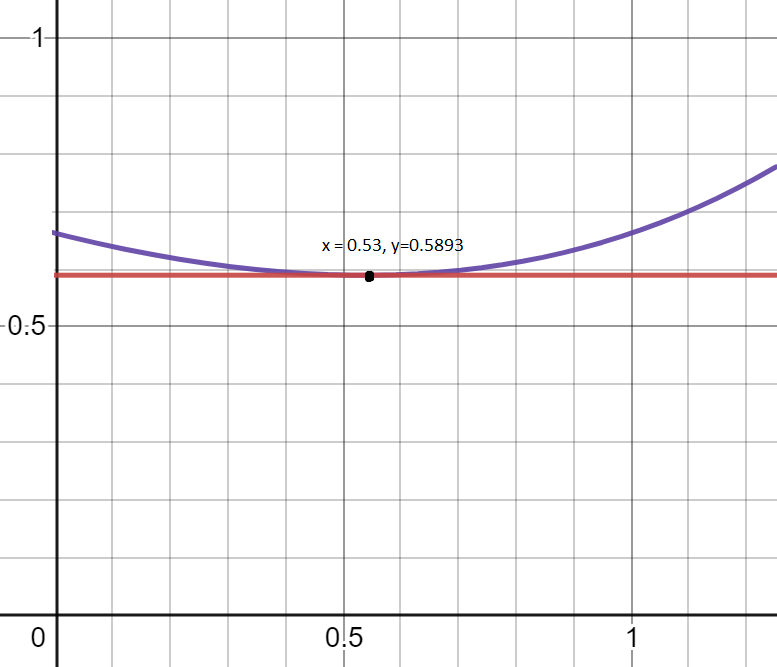}
			\centering
			\caption{With $\beta=0.89$, the straight line (red) is $y=1-e^{-\beta}> 0.5893$ and the curve (purple) is $y=\frac{1}{\beta}\left(e^{\beta(x-1)}-e^{-\beta}\right)+\frac{1-e^{-\beta}}{\beta}(1-x)\geq 0.5893$. Other values of $\beta$ lead to a lower minimum point. }
		\end{figure}
		When $\beta=1$, we have a minimum value of $0.554$.
		\hfill\Halmos\end{proof}

\section{Second Approach: Primal-Dual with Online Correlated Rental}
\label{sec:ocr-primal-dual}
In this section, we describe our second approach based on an extension of the online correlated selection (OCS) technique in \cite{fahrbach2020edge}, which we refer to as Online Correlated Rental (\ocs). To see the connection between our problem and the OCR technique (formally defined later), we first describe the architecture of our second proposed algorithm for the \omrr\, problem.  

Our proposed algorithm based is an online primal-dual algorithm maintaining a feasible primal and an infeasible dual assignment for the LP in \Cref{sec:LP} as online vertices arrive. Upon the arrival of an online vertex, we use two algorithms in sequence, referred to as the outer and inner algorithms, to select its offline match. First, the outer algorithm uses the current dual assignment of offline vertices to \emph{suggest} (or equivalently \emph{propose}) either one or two of these offline vertices as potential matches. Note that the outer algorithm ignores the availability of its suggested vertices. When this algorithm suggests a single vertex, which we also refer to as a \emph{deterministic query}, the vertex is matched if it is available. However, when two vertices are suggested, which we also refer to as a \emph{randomized query}, the inner randomized algorithm is then executed to \emph{select} which one to match. This is inspired by a similar idea in \cite{fahrbach2020edge}, where the online integral algorithm pushes its allocation towards more fractional allocations by suggesting two choices rather than one.

One simple candidate for the inner algorithm mentioned above is selecting one of the two vertices independently with probability $\frac{1}{2}$. It can be shown that this approach fails to improve over $0.5$-competitive. To make an improvement, intuitively, the goals are:
\begin{enumerate}[label=(\roman{*})]
    \item Ensuring that a matched resource $i$ is not used in the next $d-1$ time units and hence the final matching decisions are feasible given the reusability of the resources,
    \item For each resource $i$, making sure that its allocation decisions in rounds that can potentially affect each other (i.e., rounds with distance at most $d-1$) are negatively correlated; in other words, if this vertex is suggested and not chosen in round $j$, it will become more likely to be chosen if suggested again in the next $d - 1$ rounds, and vice versa.
\end{enumerate}
The new technical ingredient of our algorithm is the randomized procedure that handles the aforementioned conflicts among the reusable resources suggested by the outer algorithm over time, in a way that it satisfies both properties (i) and (ii) above. This randomized procedure is indeed what we called \ocs\, earlier.  In what follows, we first study the design of the \ocs\, procedure  as an abstract problem in \Cref{sec:OCR}. We then show how to use \ocs~as an inner algorithm with a suitable primal-dual outer algorithm in \Cref{sec:primal-dual}

 %The input of \ocs\ is sequence of set of offline vertices $Q_1, Q_2, \cdots, Q_m$. Each $Q_j$ arrives at the time $j$. Upon arrival of $Q_j$,

\subsection{Online correlated rental}
\label{sec:OCR}

In order to design an appropriate OCR procedure, we first formalize the intuitive ``negative correlation criteria'' in property (ii) for the online matching of reusable resources. On the way to formalizing this intuition, we obtain a new interpretation of the OCS procedure for non-reusable resources, which can be extended to the reusable setting. 

\smallskip
\noindent\textbf{\emph {Overview of the technique}} Consider an offline vertex $i$, and suppose its probability of being available at time $j$ is $p_j^i$. If vertex $i$ is suggested by the outer algorithm in a deterministic query, it will increase the expected size of the matching by $p_j^i$. If the vertex $i$ is one of two vertices suggested by the outer algorithm in a randomized query, we can increase the expected size of the matching including $i$ by $p_j^i/2$ if we use a uniform random allocation (with no correlation) to select between the two. Note that the other proposed vertex will also contribute.  The idea behind our OCR procedure is to add a communication across different rounds of this inner selection so that a vertex that is less likely to be selected (although available) at some time is slightly more likely to be available in future \emph{related} rounds and vice versa.
%the final vertex chosen is slightly more likely to be available. 
This communication should also happen in an almost symmetric fashion, so that we can maintain a default lower-bound on the increase in the expected size of the matching as if the decisions were made uniformly at random and independently over time. 

The exact implementation of this communication can be thought of as leaving probabilistic temporary messages on offline vertices recording their previous selection decisions. If an offline vertex $i$ is queried at time $j$ and is selected, we will occasionally tag it with this outcome for the next $d -1$ rounds. Should it be suggested again in this window of time, it will be less likely to be selected. Likewise, an offline vertex $i’$ which is suggested but \emph{not} selected will have a chance to be tagged with a message making it \emph{more} likely to be selected in the next $d - 1$ rounds. As long as this successful message-passing happens with a constant probability, the resulting OCR algorithm will satisfy our desired guarantee. %There are challenging and subtle differences between the case with symmetric durations and the general case when designing symmetric message-passing procedures as above; we defer the related technical discussion to \Cref{sec:OCR-symmetric} and \Cref{sec:OCR-general}. \RN{@Steven: this one. I think wording can be improved a lot here. } 

We start by formalizing the OCR problem and the desired performance guarantee of our OCR procedure, and then later show how to design an algorithm that achieves this guarantee. 

\subsubsection{Basics and definitions}
Let $Q_1, Q_2, \cdots, Q_{|U|}$ be the sequence of suggestions made by the outer algorithm where the outer algorithm suggests $Q_j$  to the inner algorithm at the time $j$, and each $Q_j$ contains one or two offline vertices. Upon arrival of each $Q_j$, the \ocs~algorithm -- which acts as the inner algorithm -- has to choose one offline vertex $i \in Q_j$. The outer algorithm then adds $i$ to the matching if $i$ is available at the time $j$.

 Consider an online algorithm that selects one of the vertices in $Q_j$ uniformly at random. We first analyse the expected size of the matching returned by this algorithm. We say that round $j$ is deterministic if $|Q_j|=1$, and randomized if $|Q_j|=2$. Considering an offline vertex $i$, and let $\Delta \mu^i_j$ be the probability that $i$ is matched at the round $j$. If $i \notin Q_j$, it is clear that the online algorithm will not select $i$, and we have $\Delta \mu^i_j =0$. Now consider a round $j$ where $i \in Q_j$. Since the online algorithm selects one of the vertices in $Q_j$ uniformly at random, the algorithm selects $i$ with the probability of $1/2$ if round $j$ is a randomized round and $1$ if it is a deterministic round. Although the algorithm picks $i$ with the probability of at least $1/2$, it does not necessarily imply that $\Delta \mu^i_j \ge 1/2$. The reason is that vertex $i$ might be matched in the time interval of $[j-d+1,j-1]$, and consequently not be available at the time $j$. We claim that the probability that $i$ is matched in the time interval of $[j-d+1,j-1]$ is $\sum_{t=j-d+1}^{j-1} \Delta \mu^i_t$.

\begin{claim}
    For any time $j$ and any offline vertex $i$, the probability that $i$ is matched in the time interval of $[j-d+1,j-1]$ is $\sum_{t=j-d+1}^{j-1} \Delta \mu^i_t$.
\end{claim}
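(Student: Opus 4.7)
The plan is to prove the claim via disjointness of the events involved, using the reusability constraint. Let $E^i_t$ denote the event that vertex $i$ is matched at time $t$, so by definition $\Pr[E^i_t] = \Delta \mu^i_t$. The probability we want is $\Pr\!\left[\bigcup_{t=j-d+1}^{j-1} E^i_t\right]$, and the goal is to show this union probability equals the sum $\sum_{t=j-d+1}^{j-1} \Delta \mu^i_t$, which by inclusion–exclusion reduces to showing that the events $E^i_t$ are pairwise disjoint over this window.

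First I would spell out the feasibility invariant maintained by the outer algorithm: whenever $i$ is successfully matched at some time $t$, it is unavailable (and therefore cannot be matched again) throughout the next $d-1$ rounds, i.e. in $[t+1, t+d-1]$. Next, I would take any two distinct times $t_1 < t_2$ with $t_1, t_2 \in [j-d+1, j-1]$ and observe
\[
t_2 - t_1 \leq (j-1) - (j-d+1) = d-2 < d-1,
\]
so $t_2 \in [t_1+1, t_1+d-1]$. Hence if $E^i_{t_1}$ occurs, $i$ is blocked at time $t_2$ and $E^i_{t_2}$ cannot occur, proving $E^i_{t_1} \cap E^i_{t_2} = \emptyset$.

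With pairwise disjointness in hand, additivity of probability yields
\[
\Pr\!\left[\bigcup_{t=j-d+1}^{j-1} E^i_t\right] = \sum_{t=j-d+1}^{j-1} \Pr[E^i_t] = \sum_{t=j-d+1}^{j-1} \Delta \mu^i_t,
\]
which is the desired identity. There is no real obstacle here; the only subtle point is to make sure the inequality $t_2 - t_1 \leq d-2$ is taken strictly less than $d-1$, so that the blocking window from a match at $t_1$ does indeed cover $t_2$. Everything else is a one-line application of countable additivity on disjoint events.
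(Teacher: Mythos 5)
Your proof is correct and follows essentially the same route as the paper: both observe that the reusability constraint forces the events ``$i$ is matched at time $t$'' to be mutually exclusive over any window of length $d-1$, and then apply additivity of probability over disjoint events. Your write-up just makes the distance bound $t_2 - t_1 \le d-2$ a bit more explicit than the paper does.
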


\begin{proof}{{Proof.}}
Let $X^i_t$ be a random variable which is $1$ if $i$ is matched at the round $t$ and is $0$ otherwise. Therefore, we are looking for the following probability.
\begin{align*}
    \Pr\bigg[ \bigvee_{t=j-d+1}^{j-1} X^i_t\bigg] \,.
\end{align*}
Note that upon matching $i$ at a time $t$, vertex $i$ will not be available before the time $t+d$. Therefore, vertex $i$ can be matched at most once in $[j-d+1,j-1]$, and events $X^i_t$'s are mutually exclusive. We then have,
\begin{align*}
    \Pr\bigg[ \bigvee_{t=j-d+1}^{j-1} X^i_t\bigg] = \sum_{t=j-d+1}^{j-1} \Pr\big[X^i_t\big] \,.
\end{align*}
Note that by the definition we have $\Delta \mu^i_t = \Pr[X^i_t]$, which implies the claim.\hfill\Halmos
\end{proof}

By the claim above the probability that vertex $i$ is available upon arrival of $Q_j$ is $1- \sum_{t=j-d+1}^{j-1} \Delta \mu^i_t$. We define $p^i_j = 1- \sum_{t=j-d+1}^{j-1} \Delta \mu^i_t$ to be the probability that $i$ is available at the time $j$. In the online algorithm that we described, the decision of the algorithm in the round $j$ is independent of its decisions in the previous rounds. Thus, we have
\begin{align*}
    \Delta\mu^i_j = \Pr[ i \text{\ gets selected at the round $j$}] \cdot p^i_j \,.
\end{align*}
Therefore,
\begin{align*}
    \Delta\mu^i_j = 
        \begin{cases}
            0 & i \notin Q_j \\
            p^i_j & i \in Q_j \text{ and } |Q_j|=1 \\
            p^i_j/2 & i \in Q_j \text{ and } |Q_j|=2
        \end{cases}
\end{align*}

As mentioned earlier, the goal in \ocs~ is to outperform this simple randomized algorithm which selects one of the vertices in $Q_j$ uniformly at random, by introducing some negative correlations between the decisions of the randomized rounds. Specifically, \ocs\ guarantees that if vertex $i$ was in $Q_j$ and it was not picked by \ocs, then vertex $i$ should have a slightly better chance to get selected next time that the outer algorithm proposes $i$ to the \ocs. 

%Below is the formal definition of \ocs.

Given the sequence $Q_1, Q_2, \cdots, Q_{|U|}$, we use $\prev^i_j$ to denote the previous occurrence of the offline vertex $i$ among the $Q_j$'s before the time step $j$. Specifically, $\prev^i_j$ is the largest index $j' <j$ such that $i \in Q_{j'}$. We also define $\prev^i_j$ to be $-\infty$ if vertex $i$ has not appeared in any of $Q_1, Q_2, \cdots, Q_{j-1}$. We are now ready to give a formal definition of \ocs.   

\begin{definition}[$\gamma$-\ocs]
\label{def:ocr-sym}
An algorithm is $\gamma$-\ocs\ for a $0\le \gamma \le 1$, if for every round $j$ and every offline vertex $i$ the following holds.
\begin{equation}
\label{eq:ocsdef}
    \Delta \mu^i_j \ge 
        \begin{cases}
            0 & i \notin Q_j \\
            p^i_j & i \in Q_j \text{ and } |Q_j|=1 \\
            p^i_j/2 & i \in Q_j \text{ and } |Q_j|=2 \text{ and } j' \le j-d \\
            p^i_j/2 + \gamma (p^i_{j'} - \Delta \mu^i_{j'}) & \textrm{Otherwise},
        \end{cases}
\end{equation}
where in the equation above $j'= \prev^i_j$ and $\Delta \mu^i_j$ denotes the probability that $i$ gets picked as matched at the round $j$. 
\end{definition}

Note that in the definition above the term $p^i_{j'} - \Delta \mu^i_{j'}$ is the probability that $i$ was available at the time $j'$ but it was not picked by \ocs. It is easy to verify that the algorithm that selects one of the vertices in $Q_j$ uniformly at random is $0$-\ocs. Therefore, the main challenge is to show the existence an \ocs\ algorithm for a $\gamma>0$. %As a warm-up we first show the existence of $1/32$-\ocs\ for the case that the reusability parameters for all offline vertices are the same. In other words, we have $d_i= d$ for every offline vertex. We refer to this case as \omrr\ with \textit{symmetric durations}.

\label{sec:OCR-symmetric}
We now show the existence of an $1/32$-\ocs\  algorithm. 
\begin{proposition}
\label{prop:OCR}
Algorithm~\ref{alg:ocs-sym} is an $1/32$-\ocs\ online algorithm (as in \Cref{def:ocr-sym})
\end{proposition}
\SetInd{0.5em}{0.5em}
\begin{algorithm}[htb]
\SetAlgoNoLine
        Upon receiving a proposal $Q_j$\;
        \eIf{$|Q_j|=1$}
        {
            Let assume $Q_j=\{i_1\}$\;
            $\tau_{i_{1},[j+1, j+d-1]} = \un$\;
            \textbf{Select} $i_1$\;
        }
        {
        Let assume $Q_j = \{i_1,i_2 \}$ \;
        With the probability of $1/2$, round $j$ is a sender:\\ \Indp
        Draw $\ell, m \in \{1,2\}$ uniformly at random \;
        $\tau_{i_{\bar m},[j+1, j+d-1]} = \un$ \;
        \eIf{$m=\ell$}
        {
        $\tau_{i_{m},[j+1, j+d-1]} = \sel$\;
        }
        {
        $\tau_{i_{m},[j+1, j+d-1]}= \nsel$ \;
        } \Indm
        With the probability of $1/2$, round $j$ is a receiver: \\ \Indp
        Draw $m \in \{1,2\}$ uniformly at random \;
        \eIf{$\tau_{i_m,j}=\sel$}
        {
            $\ell= \bar m$\;
        }
        {
            Draw $\ell \in \{1,2\}$ uniformly at random\;
        }
        $\tau_{i_1,[j+1, j+d-1]}= \un$\;
        $\tau_{i_2,[j+1, j+d-1]} = \un$\; \Indm
        \textbf{Select} $i_\ell$\;
        }
		\caption{$1/32$-\ocs\ Algorithm}
		\label{alg:ocs-sym}
	\end{algorithm}

\subsubsection{OCR algorithm and analysis}

Before proceeding to the proof of \Cref{prop:OCR}, we provide an overview of \Cref{alg:ocs-sym}. The algorithm maintains a state variable $\tau_{i,j}$ for every offline vertex $i$ and time $j$. Let $j' = \prev^i_j$, if $\tau_{i,j}$ is either \textit{selected} or \textit{not-selected} it means that the result of $Q_{j'}$ can be used at the time $j$ to give us a negative correlation. Otherwise, if $\tau_{i,j}$ is \textit{unknown}, it means that we can not use the result of the last occurrence of $i$. In Algorithm \ref{alg:ocs-sym}, with a slight abuse of notation we use $\tau_{i,[l,r]}$ to denote all state variables $\tau_{i,l},\tau_{i,l+1}, \cdots, \tau_{i,r}$.

Suppose that the algorithm receives a pair $Q_j=\{i_1, i_2\}$ at the time $j$. Then the algorithm decides to be a \textit{receiver} or \textit{sender} at this round uniformly at random with the probability of $1/2$. In a \textit{sender} round, the algorithm uses a fresh random bit to select $i_\ell$ where $\ell \in \{1,2\}$ is chosen uniformly at random. the algorithm also uses a new random bit to select $m \in \{1,2\}$ at random. The algorithm then updates state variables $\tau_{i_m,[j+1,j+d-1]}$ to reflect whether we have selected $i_{m}$ in this round. Specifically, we set state variables $\tau_{i_m,[j+1,j+d-1]}$ to be \textit{selected} if $\ell=m$ and we set $\tau_{i_m,[j+1,j+d-1]}$ to be \textit{not-selected} otherwise. We also reset the state variables for $i_{\bar m}$ by setting $\tau_{i_{\bar{m}},[j+1,j+d-1]}$ to be \textit{unknown} where $\bar m$ is an abbreviation for $3-m$. We remark that in a sender round, we only change the state variables for the next $d-1$ rounds. Therefore, the decision of the algorithm in a sender round can only affect the next $d-1$ rounds.

In a receiver round, the \ocs\ uses the state variables that have been set in previous rounds. The algorithm chooses $m \in \{1,2\}$ uniformly at random and uses the state variables of vertex $i_m$ to provide a negative correlation. Consider the case $\tau_{i_m,j}$ is \textit{not-selected}. This means that the \ocs\ did not pick $i_m$ the previous time this vertex was proposed to the algorithm. In this case, the algorithm selects $i_m$ for the round $j$. When $\tau_{i_m,j}$ is \textit{selected}, the case is similar and the algorithm selects $i_{\bar m}$ at this round. When $\tau_{i_m,j}$ is \textit{unknown}, the algorithm selects $\ell \in \{0,1\}$ uniformly at random, and selects $i_\ell$ at this round. At the end of a receiver round the algorithm resets the state variables for the future rounds by setting $\tau_{i_{1},[j+1,j+d-1]}$ and $\tau_{i_{2},[j+1,j+d-1]}$ to be \textit{unknown}.

What we have discussed so far was our \ocs\ algorithm for randomized rounds. Consider a deterministic round and suppose that the algorithm receives $Q_j= \{i_1 \}$. In this case the algorithm selects $i_1$, and resets the state variables of $i_1$ for the future rounds by setting $\tau_{i_{1},[j+1,j+d-1]}$ to be \textit{unknown}.

\paragraph{Proof of \Cref{prop:OCR}} We now show that Algorithm \ref{alg:ocs-sym} is $1/32$-\ocs. Our analysis relies on a simple fact. Let $\mu^i(Q_1, Q_2, \cdots, Q_{|U|})$ be expected the number of times that the algorithm matches vertex $i$ when it receives sequence of vertices $Q_1, Q_2, \cdots, Q_{|U|}$. We show that we always have $\mu^i(Q_1, Q_2, \cdots, Q_{|U|}) = \mu^i(Q_{|U|}, Q_{|U|-1}, \cdots, Q_1)$. This fact shows that if Algorithm \ref{alg:ocs-sym} receives the sets $Q_1, Q_2, \cdots, Q_{|U|}$ in the reversed order, it matches every offline vertex the same number times in expectation. For the simplicity of the presentation, we use $\vec{Q}_{[l,r]}$ as an abbreviation for the sequence $Q_l, Q_{l+1}, \cdots, Q_r$ when $l \le r$, and for the sequence $Q_l, Q_{l-1}, \cdots, Q_r$ when $l > r$.

\begin{claim}
\label{clm:reverse}
For any sets of proposed vertices $Q_1, Q_2, \cdots, Q_{\ell}$, and every offline vertex $i$ the following holds.
$$\mu^i(Q_{[1,\ell]}) = \mu^i(Q_{[\ell,1]}) \,.$$
\end{claim}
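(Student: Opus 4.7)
The plan is to reduce the claim to a distributional reversal identity for the algorithm's selection sequence, by first establishing that the match count is itself reversal-invariant as a functional of the selection sequence. For any realized selection sequence $S = (S_1, \ldots, S_\ell)$, the number of times $i$ gets matched is a reversal-invariant function of $S$: the matching process greedily marks, from left to right, the earliest remaining position at which $i$ is selected and lies outside the ``dead zone'' of $d-1$ rounds following the previous match. This greedy count equals the maximum number of $i$-selection positions in $S$ with all pairwise gaps exceeding $d-1$, i.e., the maximum independent set in the distance-$\leq d-1$ graph on those positions. Since that maximum depends only on the \emph{set} of $i$-positions, it is symmetric in their order and hence invariant under reversing the sequence. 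Writing $\alpha(S, i)$ for this count, we get $\mu^i(Q_{[1,\ell]}) = \mathbb{E}[\alpha(S^f, i)]$ and $\mu^i(Q_{[\ell,1]}) = \mathbb{E}[\alpha(S^r, i)]$, where $S^f$ and $S^r$ are the selection sequences that Algorithm~\ref{alg:ocs-sym} produces on $Q_{[1,\ell]}$ and $Q_{[\ell,1]}$ respectively.

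Using the reversal-invariance of $\alpha(\cdot, i)$, the claim reduces to showing that $\mathrm{rev}(S^f)$ has the same distribution as $S^r$, where $\mathrm{rev}(\cdot)$ reverses the coordinate order. I would prove this distributional identity by induction on $\ell$, exploiting two structural symmetries of Algorithm~\ref{alg:ocs-sym}: (i) the sender/receiver role at every randomized round is an i.i.d. fair coin flip, so the role sequence has the same joint law as its reversal; and (ii) the marginal selection probability at each round is $1/|Q_j|$ for every vertex of $Q_j$, independent of the incoming state --- by a short case analysis, it is uniform in a sender round, and in a receiver round it averages to $1/2$ over the uniform choice of the message index $m$ together with the equally likely ``selected''/``not-selected'' labels of any informative prior state. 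Because the state variable $\tau_{i,\cdot}$ is overwritten at every $i$-occurrence, inter-round communication is local, which enables the inductive peeling of one endpoint round at a time.

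The main obstacle is lifting the pairwise symmetries to the full joint distribution. The pairwise joint $(S_{j_1}, S_{j_2})$ at two $i$-occurrences within distance $d-1$ is readily shown to be invariant under swapping the processing order of the two queries, by enumerating the four role combinations and using the symmetry between ``selected'' and ``not-selected'' labels. However, the full joint across all rounds involves state variables at the \emph{other} offline vertices --- an ``other'' vertex in one randomized query may itself appear elsewhere in the sequence and carry its own accumulated state --- so the effective correlation structure couples many rounds. Overcoming this requires strengthening the induction hypothesis to the joint distribution of the entire selection sequence (rather than just the matching count for a single vertex), and carefully tracking how the role and message-vertex choice drawn at the peeled endpoint round propagate into the remaining sub-problem.
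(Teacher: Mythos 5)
Your first step --- showing the match count for $i$ is a reversal-invariant functional of the realized selection sequence because it equals a maximum ``spread-out'' subset of $i$-selection positions --- is exactly the paper's Observation~\ref{obs:greedysym}, and that part is sound.

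The second step, however, has a genuine gap, and you've flagged it yourself: you need a distributional reversal identity for the selection sequence, you propose to get it by peeling endpoints inductively, but you explicitly stop at ``the effective correlation structure couples many rounds'' and leave the strengthened induction as future work. The difficulty you name is real. When you peel the first round of the forward sequence, that round may be a \textit{sender} whose message tag is consumed $d'-1 < d$ rounds later; when you peel the first round of the backward sequence (the same $Q_j$), it may be a \textit{receiver} consuming a tag left by an earlier sender. The two peeled sub-instances therefore do not line up round-for-round, and the state carried by the \emph{other} vertex of $Q_j$ entangles the sub-problems. A naive induction on $\ell$ that conditions only on the peeled round's own coin flips will not telescope. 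The paper closes exactly this gap by introducing the \emph{dependency graph} and the \emph{realized dependency graph}: each round independently ``nominates'' at most one incident edge with probability $1/4$ (its $1/2$ role flip times its $1/2$ message-vertex choice), an edge is realized iff nominated by both endpoints, and realized edges form a matching. This description is manifestly invariant under reversing the time order (an unordered pair nominated on both sides is nominated on both sides), and crucially, conditioned on the realized matching, the $i$-selections at different rounds decouple into independent fair coins and perfectly anti-correlated pairs --- which is the reversal-invariant object you can actually carry through. Your proposal is missing this decoupling device; without it the inductive step you sketch does not go through, so as written the proof is incomplete.

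One smaller remark: you aim for the stronger claim that the \emph{entire} selection sequence $S$ (over all offline vertices) has reversal-symmetric law, whereas the paper only needs and proves the weaker fact that the $\{0,1\}$-indicator process $\vec X$ of ``$i$ selected at round $j$'' has the same law as $\vec Y$. The stronger claim does hold once you have the dependency-graph conditioning, but targeting it directly makes the induction harder; tracking only the $i$-marginal (conditioned on the realized matching) is the lighter invariant that makes the argument close.
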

The proof of the claim above is deferred to Appendix~\ref{apx:claim-rev}.

We now show that Algorithm \ref{alg:ocs-sym} is $1/32$-\ocs. Let $Q_1, Q_2, \cdots, Q_{|U|}$ be the sets proposed by the outer algorithm. We show that the expected matching returned by Algorithm \ref{alg:ocs-sym} satisfies \cref{eq:ocsdef}. By the definition of $\mu^i$, for any round $j$, $\mu^i(Q_1, \cdots, Q_j)$ is the expected number of times that the algorithm matches $i$ for the first $j$ rounds. Therefore, $\mu^i(Q_1, \cdots, Q_j) - \mu^i(Q_1, \cdots, Q_{j-1})$ is the expected increment in the size of the matching in round $j$, which is the probability that the algorithm matches $i$ at the round $j$. Therefore, we have

\begin{align}
    \label{eq:deltamu}
    \Delta \mu^i_j= \mu^i(Q_1, \cdots, Q_j) - \mu^i(Q_1, \cdots, Q_{j-1}) \,.
\end{align}
By applying Claim \ref{clm:reverse}, we get the following.
\begin{align}
    \label{eq:deltamurev}
    \Delta \mu^i_j= \mu^i(Q_j, \cdots, Q_1) - \mu^i(Q_{j-1}, \cdots, Q_1) \,.
\end{align}
We can re-write (\ref{eq:deltamurev}) as 
\begin{align}
\label{eq:deltamusimp}
\Delta \mu^i_j = \mu^i(\vec{Q}_{[j,1]}) - \mu^i(\vec{Q}_{[j-1,1]})
\end{align}

Consider Algorithm \ref{alg:ocs-sym} for the input sequence $\vec{Q}_{[j,1]}$ and an offline vertex $i$,  we derive a lower bound on the expected number of times that the algorithm matches $i$.  First, we claim that in order to find the expected size of the matching of vertex $i$, we can only look at the decisions of the algorithm for rounds that $i$ is proposed. If we consider a randomized round $Q_j=\{i_1, i_2\}$, the algorithm picks each of $i_1$ and $i_2$ with the probability of $1/2$ when the probability is taken over all random decisions of the algorithm. Algorithm \ref{alg:ocs-sym} just introduces a slight negative correlation between its decision for different rounds. Although the decision of the algorithm for a round $Q_j$ which contains vertex $i$ might be correlated to other rounds that does not contain $i$, these correlations does not affect the number of times that the algorithm matches $i$. Therefore, for the analysis of the algorithm, we can only consider the rounds that contain $i$ and the correlations between these rounds. 

We now consider the different cases of definition of \ref{eq:ocsdef}, and we show that our algorithm satisfies the properties of \ocs. 
%\RN{Re my comment about using the itemize/enumerate environment for different; as I said, it is very wasteful of space.}

\begin{itemize} [leftmargin=*]
    \item \textbf{Case 1; $i \notin Q_j$:} As we discussed earlier we can only consider rounds that $i$ is proposed to the algorithm. Since $i$ is not proposed in round $j$, the algorithm does not match $i$, and goes to the next round. Therefore, in this case we have $\mu^i(\vec{Q}_{[j,1]}) = \mu^i(\vec{Q}_{[j-1,1]})$. By (\ref{eq:deltamusimp}), we have
    \begin{align*}
        \Delta \mu^i_j = \mu^i(\vec{Q}_{[j,1]}) - \mu^i(\vec{Q}_{[j-1,1]}) = 0 \,.
    \end{align*}
    \item \textbf{Case 2; $i \in Q_j$ and $|Q_j|=1$:} In this case, the algorithm is given a deterministic round. Therefore, the algorithm picks and matches $i$ with the probability of $1$. After matching $i$, this vertex becomes available for matching again only after $d$ time units when set $Q_{j-d}$ arrives. In this case the algorithm cannot match $i$ when the sets $Q_{[j-1, j-d+1]}$ arrive. Therefore the decisions and correlations of the algorithm for these rounds does not affect the expected size of the matching. Therefore, we have $\mu^i(\vec{Q}_{[j,1]}) = 1+ \mu^i(\vec{Q}_{[j-d,1]})$. We then have,
    \begin{align}
    \label{eq:ocssymdet}
        \Delta \mu^i_j &= \mu^i(\vec{Q}_{[j,1]}) - \mu^i(\vec{Q}_{[j-1,1]}) = 1+ \mu^i(\vec{Q}_{[j-d,1]}) - \mu^i(\vec{Q}_{[j-1,1]})
    \end{align}
    In the following observation we show that $\mu^i(\vec{Q}_{[j-1,1]}) - \mu^i(\vec{Q}_{[j-d,1]}) = \sum_{t=j-d+1}^{j-1} \Delta \mu^i_t$.
    
    \begin{observation}
        \label{obs:telsum}
        For any $j'<j$, 
        \begin{align*}
            \mu^i(\vec{Q}_{[j,1]}) - \mu^i(\vec{Q}_{[j',1]}) = \sum_{t=j'+1}^{j} \Delta \mu^i_t.
        \end{align*}
    \end{observation}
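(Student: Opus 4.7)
The plan is to recognize this observation as an immediate telescoping identity, built directly on equation~(\ref{eq:deltamusimp}). Recall that equation~(\ref{eq:deltamusimp}), which was derived from Claim~\ref{clm:reverse} applied to the definition of $\Delta \mu^i_j$ in equation~(\ref{eq:deltamu}), asserts that for every round $t$,
\[
\Delta \mu^i_t \;=\; \mu^i(\vec{Q}_{[t,1]}) - \mu^i(\vec{Q}_{[t-1,1]}).
\]
This is exactly a finite difference of the sequence $\{\mu^i(\vec{Q}_{[t,1]})\}_t$, and summing finite differences over a contiguous range collapses telescopically. So the core move is: apply this identity term-by-term inside $\sum_{t=j'+1}^{j} \Delta \mu^i_t$, and cancel.

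Concretely, I would first substitute to get
\[
\sum_{t=j'+1}^{j} \Delta \mu^i_t \;=\; \sum_{t=j'+1}^{j} \bigl( \mu^i(\vec{Q}_{[t,1]}) - \mu^i(\vec{Q}_{[t-1,1]}) \bigr).
\]
Then I would observe that each intermediate term $\mu^i(\vec{Q}_{[t,1]})$ for $t=j'+1,\dots,j-1$ appears once with a positive sign (from the $t$-th summand) and once with a negative sign (from the $(t+1)$-th summand), so they all cancel. What remains is the largest-index positive term and the smallest-index negative term, namely $\mu^i(\vec{Q}_{[j,1]}) - \mu^i(\vec{Q}_{[j',1]})$, which is precisely the claimed identity.

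The step with any real content is the appeal to equation~(\ref{eq:deltamusimp}) itself; this is where the symmetry Claim~\ref{clm:reverse} is being used to rewrite $\Delta \mu^i_t$ in terms of the \emph{reversed} input sequence so that the telescoping actually happens on the sequences $\vec{Q}_{[t,1]}$ appearing in the statement. Once this identification is made, no further work is needed: the sum is a standard telescoping, requires no probabilistic reasoning about Algorithm~\ref{alg:ocs-sym}, and holds unconditionally for every offline vertex $i$ and every pair $j' < j$. I therefore expect no obstacle beyond carefully pointing at equation~(\ref{eq:deltamusimp}).
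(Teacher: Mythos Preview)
Your proposal is correct and matches the paper's own proof essentially verbatim: both arguments invoke equation~(\ref{eq:deltamusimp}) for each $t$ and then telescope the resulting sum of differences. The only cosmetic difference is that the paper starts from $\mu^i(\vec{Q}_{[j,1]}) - \mu^i(\vec{Q}_{[j',1]})$ and expands it into the telescoping sum, whereas you start from $\sum_{t=j'+1}^{j}\Delta\mu^i_t$ and collapse it; the content is identical.
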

    \begin{proof}{Proof.}
        We can re-write $\mu^i(\vec{Q}_{[j,1]}) - \mu^i(\vec{Q}_{[j',1]})$ as follows.
        \begin{align*}
            \mu^i(\vec{Q}_{[j,1]}) - \mu^i(\vec{Q}_{[j',1]}) = \sum_{t=j'+1}^{j} \big(\mu^i(\vec{Q}_{[t,1]}) - \mu^i(\vec{Q}_{[t-1,1]})\big) \,.
        \end{align*}
        Using equation (\ref{eq:deltamusimp}), we then have
        \begin{align*}
            \mu^i(\vec{Q}_{[j,1]}) - \mu^i(\vec{Q}_{[j',1]}) &= \sum_{t=j'+1}^{j} \big(\mu^i(\vec{Q}_{[t,1]}) - \mu^i(\vec{Q}_{[t-1,1]})\big) =\sum_{t=j'+1}^{j} \Delta \mu^i_t\,.
        \end{align*}\hfill\Halmos
    \end{proof}
    
    By the observation above and (\ref{eq:ocssymdet}) we have
    \begin{align*}
        \Delta \mu^i_j & = 1+ \mu^i(\vec{Q}_{[j-d,1]}) - \mu^i(\vec{Q}_{[j-1,1]}) = 1- \sum_{t=j-d+1}^{j-1} \Delta \mu^i_t \,.
    \end{align*}
    Note that by the definition $1- \sum_{t=j-d+1}^{j-1} \Delta \mu^i_t$ is equal to $p^i_j$. Therefore, $\Delta \mu^i_j = p^i_j$ which completes the proof for this case.
    \item \textbf{Case 3; $i \in Q_j$,  $|Q_j|=2$, and $\prev^i_j \le j-d$:} Let $j' = \prev^i_j$, then $Q_{j'}$ is the next time that vertex $i$ is proposed to the algorithm in the sequence $\vec{Q}_{[j-1,1]}$. Recall that Algorithm \ref{alg:ocs-sym} upon making a decision for a sender round, changes the state variables for the next $d-1$ rounds. In the case that $j' \le j-d$, the algorithm does not change the state variables of $Q_{j'}$. Consider the sequence of proposals $Q_{[j,1]}$ and consider the algorithm when it receives $Q_j$. The algorithm selects $i$ with the probability of $1/2$ regardless it is a \textit{sender} or a \textit{receiver} round (Since $Q_j$ is the first round in the backward sequence). Also the sender round does not change the state variables for the next time vertex $i$ gets proposed to the algorithm. Therefore with the  probability of $1/2$, algorithm selects $i$, and we get the expected matching of size $1+ \mu^i(\vec{Q}_{[j-d,1]})$. Furthermore, the algorithm does not select $i$ with the probability of $1/2$. In this case the algorithm basically goes to next round and we get the expected matching of size $ \mu^i(\vec{Q}_{[j-1,1]})$. Therefore,
    \begin{align}
        \label{eq:ocssymc3}
        \mu^i(\vec{Q}_{[j,1]}) = \frac{1}{2} \cdot (1+ \mu^i(\vec{Q}_{[j-d,1]}))+ \frac{1}{2} \cdot \mu^i(\vec{Q}_{[j-1,1]}) \,.
    \end{align}
    Hence,
    \begin{align*}
        \Delta \mu^i_j &= \mu^i(\vec{Q}_{[j,1]}) - \mu^i(\vec{Q}_{[j-1,1]}) & \text{By (\ref{eq:deltamusimp}).} \\
        &= \frac{1}{2} \cdot \big(1+ \mu^i(\vec{Q}_{[j-d,1]}) - \mu^i(\vec{Q}_{[j-1,1]})\big) & \text{By (\ref{eq:ocssymc3}).} \\
        &= \frac{1}{2} \cdot \big(1- \sum_{t=j-d+1}^{j-1} \Delta \mu^i_t\big) = \frac{p^i_j}{2}  & \text{Observation \ref{obs:telsum}.} \,.
    \end{align*}
    \item \textbf{Case 4; $i \in Q_j$, $|Q_j|=2$, and $\prev^i_j > j-d$:} This is the only case that algorithm uses the negative correlation between the decisions for different rounds to get a better solution. Let $j' = \prev^i_j$, We further divide this case into two subcases.
    \begin{itemize} [leftmargin=12px]
        \item \textbf{$Q_{j'}$ is a deterministic round:}
        First note that when $Q_{j'}$ is deterministic round, we have $\Delta \mu^i_{j'} = p^i_{j'}$ according to what we have already discussed. Also, the decision of a deterministic round does not rely on the state variables, and Algorithm \ref{alg:ocs-sym} always resets state variables after a deterministic round. Considering the backward sequence $Q_{[j,1]}$, the algorithm selects $i$ with the probability of $1/2$ when it receives $Q_j$ regardless of if it is a \textit{sender} or a \textit{receiver} round. Therefore with the  probability of $1/2$, our algorithm selects $i$, and we get the expected matching of size $1+ \mu^i(\vec{Q}_{[j-d,1]})$. Also, with the  probability of $1/2$, our algorithm does not select $i$ when it receives $Q_j$. In this case vertex $i$ only has a higher chance to get selected next times this vertex is proposed to the algorithm. This depends on the realized edges incident to $Q_j$. Therefore, if the algorithm does not select $i$, the expected size of the matching in the remaining sequence is at least $\mu^i(\vec{Q}_{[j-1,1]})$. Therefore,
        \begin{align*}
        \mu^i(\vec{Q}_{[j,1]}) \ge \frac{1}{2} \cdot (1+ \mu^i(\vec{Q}_{[j-d,1]}))+ \frac{1}{2} \cdot \mu^i(\vec{Q}_{[j-1,1]})  \,.
    \end{align*}
    We then have
        \begin{align*}
        \Delta \mu^i_j &= \mu^i(\vec{Q}_{[j,1]}) - \mu^i(\vec{Q}_{[j-1,1]}) & \text{By (\ref{eq:deltamusimp}).} \\
        &\ge \frac{1}{2} \cdot \big(1+ \mu^i(\vec{Q}_{[j-d,1]}) - \mu^i(\vec{Q}_{[j-1,1]})\big)\\
        &= \frac{1}{2} \cdot \big(1- \sum_{t=j-d+1}^{j-1} \Delta \mu^i_t\big) & \text{Observation \ref{obs:telsum}.} \\
        &= \frac{p^i_j}{2} = \frac{p^i_j}{2} + \gamma (p^i_{j'} - \Delta \mu^i_{j'})\,,
    \end{align*}
        where the last equality relies on the fact that  $ p^i_{j'} = \Delta \mu^i_{j'}$ and  $ p^i_{j'} -  \Delta \mu^i_{j'} = 0$.
        \item \textbf{$Q_{j'}$ is a randomized round:}   Consider the backward sequence $Q_{[j,1]}$, and let $\mu^i_{\receiver}(\vec{Q})$ be the expected size of the matching for vertex $i$, given that the round $Q_j$ is a receiver round, and let $\mu^i_{\sender}(\vec{Q})$ be the expected size of the matching given that the $Q_j$ is a sender round. Note that at each randomized round the algorithm chooses to be a sender or a receiver uniformly at random. Therefore for any input sequence $\vec{Q}$ such that the first round of $\vec{Q}$ is a randomized round, we have
            \begin{align}
            \label{eq:mu-rs}
            \mu^i(\vec{Q}) = \frac{1}{2} \cdot \mu^i_{\receiver}(\vec{Q})+ \frac{1}{2} \cdot \mu^i_{\sender}(\vec{Q}) \,.
        \end{align}
        
        Consider the input sequence $\vec{Q}_{[j,1]}$ and suppose that $Q_j$ is a receiver round. Since this is a first round in the backward sequence and all state variables are default values, the algorithm just selects $i$ with the probability of $1/2$ and proceeds to the next round. Similar to what we discussed for the previous case, if the algorithm selects $i$, the expected size of the matching is $1+ \mu^i(\vec{Q}_{[j-d,1]})$. Otherwise, it is $ \mu^i(\vec{Q}_{[j-1,1]})$. Thus,
        \begin{align}
        \label{eq:mureciver}
            \mu^i_{\receiver}(\vec{Q}_{[j,1]}) = \frac{1}{2} \cdot (1+ \mu^i(\vec{Q}_{[j-d,1]}))+ \frac{1}{2} \cdot \mu^i(\vec{Q}_{[j-1,1]}) \,.
        \end{align}
        
        Now consider a sender round $Q_j$. The algorithm selects $i$ with the probability of $1/2$. In this case the algorithm cannot match $i$ again within the next $d-1$ rounds. Therefore, in this case the expected size of the matching is $1+ \mu^i(\vec{Q}_{[j-d,1]})$. Otherwise suppose that the algorithm did not select $i$ which happens with the probability of $1/2$. Let $Q_j= \{i_1, i_2\}$. In this case algorithm selects $m \in \{0,1\}$ uniformly at random and changes the state variables for $i_m$ to create a negative correlation for the next rounds. Suppose $i_m \neq i$ which happens with the probability of $1/2$. In this case, the algorithm might have a realized edge between this round and one of the other rounds that $i$ is proposed to the algorithm. In this case the algorithm has a better chance to select $i$ for the next rounds. However, the existence of this realized edge cannot be guaranteed and we can say that in this case the algorithm returns a matching with the expected size of at least $\mu^i(\vec{Q}_{[j-1,1]})$ in the remaining sequence. In the remaining of the analysis we assume that the algorithm does not pick $i$ for round $Q_j$, and $i_m = i$ for this round. 
        
        In this case the algorithm sets the state variables $\tau_{i}$  to \textit{not-selected} for the next $d-1$ rounds to introduce a negative correlation. Now consider the round that the algorithm receives $i$ which is$Q_{j'}$. If round $Q_{j'}$ is sender round, the algorithm does not use the previous state variables and we get an expected matching size of $\mu^i_{\sender}(\vec{Q}_{[j',1]})$. Now Consider a receiver round $Q_{j'}$. Let $Q_{j'}= \{i'_1, i'_2\}$. In this case algorithm selects $m \in \{0,1\}$ uniformly at random and uses the state variables for $i'_m$ to ensure a negative correlation.  If $i'_m \neq i$, the algorithm might have a realized edge between rounds $Q_j$ and $Q_{j'}$ which causes a negative correlation. However, the existence of such an edge cannot be guaranteed. Therefore, in this case we can assume that algorithm does not uses the state variables. Thus, we get at a matching of size least $\mu^i_{\receiver}(\vec{Q}_{[j',1]})$. However, if $i'_m$ is equal to $i$, the algorithm ensures a negative correlation between rounds $Q_j$ and $Q_{j'}$. Therefore the algorithm selects $i$, and we get an expected matching size of $1+ \mu^i(\vec{Q}_{[j'-d,1]})$.
        
        Putting all together, for a sender round $Q_j$ we have
        
        \begin{align}
        \label{eq:musender}
             \mu^i_{\sender}(\vec{Q}_{[j,1]})& \ge \frac{1}{2} \cdot \big(1+ \mu^i(\vec{Q}_{[j-d,1]})\big) + \frac{1}{4} \cdot \mu^i(\vec{Q}_{[j-1,1]}) \nonumber \\
             &+  \frac{1}{8} \cdot \mu^i_{\sender}(\vec{Q}_{[j',1]}) + \frac{1}{16} \cdot \mu^i_{\receiver}(\vec{Q}_{[j',1]}) + \frac{1}{16} \cdot \big(1+ \mu^i(\vec{Q}_{[j'-d,1]})\big)
        \end{align}

        By Equation (\ref{eq:mu-rs}), we then have
        
        \begin{align*}
            \mu^i(\vec{Q}_{[j,1]})  &=  \frac{1}{2} \cdot \mu^i_{\receiver}(\vec{Q}_{[j,1]})+ \frac{1}{2} \cdot \mu^i_{\sender}(\vec{Q}_{[j,1]}) \\
            & \ge \frac{1}{4} \cdot (1+ \mu^i(\vec{Q}_{[j-d,1]}))+ \frac{1}{4} \cdot \mu^i(\vec{Q}_{[j-1,1]}) \\
            & + \frac{1}{4} \cdot \big(1+ \mu^i(\vec{Q}_{[j-d,1]})\big) + \frac{1}{8} \cdot \mu^i(\vec{Q}_{[j-1,1]}) \\
            &+  \frac{1}{16} \cdot \mu^i_{\sender}(\vec{Q}_{[j',1]}) + \frac{1}{32} \cdot \mu^i_{\receiver}(\vec{Q}_{[j',1]}) \\
            & + \frac{1}{32} \cdot \big(1+ \mu^i(\vec{Q}_{[j'-d,1]})\big) & \text{By (\ref{eq:mureciver}) and (\ref{eq:musender}).}
        \end{align*}
        By simplifying the inequality above we get
        \begin{align*}
            \mu^i(\vec{Q}_{[j,1]}) & \ge \frac{1}{2} \cdot (1+ \mu^i(\vec{Q}_{[j-d,1]}))+ \frac{3}{8} \cdot \mu^i(\vec{Q}_{[j-1,1]}) \\
            &+  \frac{1}{16} \cdot \mu^i_{\sender}(\vec{Q}_{[j',1]}) + \frac{1}{32} \cdot \mu^i_{\receiver}(\vec{Q}_{[j',1]}) \\
            & + \frac{1}{32} \cdot \big(1+ \mu^i(\vec{Q}_{[j'-d,1]})\big)
        \end{align*}
        
        Note that by Equation (\ref{eq:mu-rs}), we have $\mu^i(\vec{Q}_{[j',1]}) = \frac{1}{2} \cdot \mu^i_{\sender}(\vec{Q}_{[j',1]}) + \frac{1}{2} \cdot \mu^i_{\receiver} (\vec{Q}_{[j',1]})$. Thus, we can re-write the inequality above as
        \begin{align*}
            \mu^i(\vec{Q}_{[j,1]}) & \ge \frac{1}{2} \cdot \big(1+ \mu^i(\vec{Q}_{[j-d,1]}) + \mu^i(\vec{Q}_{[j-1,1]})\big)  + \frac{1}{32} \cdot \big(1+ \mu^i(\vec{Q}_{[j'-d,1]}) - \mu^i_{\receiver}(\vec{Q}_{[j',1]})\big) \,.
        \end{align*}
        
        Therefore,
        \begin{align}
        \label{eq:ocsc4delta}
            \Delta \mu^i_j &= \mu^i(\vec{Q}_{[j,1]}) - \mu^i(\vec{Q}_{[j-1,1]}) & \text{By (\ref{eq:deltamusimp}).} \nonumber \\
            &\ge \frac{1}{2} \cdot \big(1+ \mu^i(\vec{Q}_{[j-d,1]}) - \mu^i(\vec{Q}_{[j-1,1]})\big) \nonumber \\
            & + \frac{1}{32} \cdot \big(1+ \mu^i(\vec{Q}_{[j'-d,1]}) - \mu^i_{\receiver}(\vec{Q}_{[j',1]})\big)
        \end{align}
    \end{itemize}
    
    We complete the proof using the claim below. The proof is available in Appendix \ref{appx:missing}.
    
    \begin{claim} \label{claim:receiversender}
        For any backward sequence $\vec{Q}_{[j,1]}$ where $Q_j$ is a randomized round, the following holds.
        \begin{align*}
            \mu^i_{\receiver}(\vec{Q}_{[j,1]}) \le \mu^i(\vec{Q}_{[j,1]}) \le \mu^i_{\sender}(\vec{Q}_{[j,1]}) \,. 
        \end{align*}
    \end{claim}
    
    It follows from the claim above and Inequality (\ref{eq:ocsc4delta}) that
    
    \begin{align*}
        \Delta \mu^i_j &\ge \frac{1}{2} \cdot \big(1+ \mu^i(\vec{Q}_{[j-d,1]}) - \mu^i(\vec{Q}_{[j-1,1]})\big) \\
            & + \frac{1}{32} \cdot \big(1+ \mu^i(\vec{Q}_{[j'-d,1]}) - \mu^i_{\receiver}(\vec{Q}_{[j',1]})\big) \\
            &\ge \frac{1}{2} \cdot \big(1+ \mu^i(\vec{Q}_{[j-d,1]}) - \mu^i(\vec{Q}_{[j-1,1]})\big) \\
            & + \frac{1}{32} \cdot \big(1+ \mu^i(\vec{Q}_{[j'-d,1]}) - \mu^i(\vec{Q}_{[j',1]})\big) \\
            &= \frac{1}{2} \cdot \big(1 - \sum_{t=j-d+1}^{j-1} \Delta \mu^i_t\big) \\
            & + \frac{1}{32} \cdot \big(1 - \sum_{t=j'-d+1}^{j'} \Delta \mu^i_t \big) & \text{Observation \ref{obs:telsum}.} \\
            &= \frac{1}{2} \cdot p^i_j + \frac{1}{32} \cdot (p^i_{j'} - \Delta \mu^i_{j'}) \,,
    \end{align*}
    which completes the analysis and shows that Algorithm \ref{alg:ocs-sym} is $1/32$-\ocs.
\end{itemize}

\subsection{Primal-dual algorithm}
\label{sec:primal-dual}
We now present our final primal-dual algorithm based on the OCR algorithm proposed in \Cref{sec:OCR}. We then provide a primal-dual analysis of the algorithm to show the competitive ratio of $\Gamma\approx 0.505$.

\smallskip
\noindent\textbf{\emph{Overview}} Given the type of guarantees by the OCR procedure as in \Cref{def:ocr-sym}, we develop a primal-dual algorithm and its analysis. Our method follows a similar logic as the recent primal-dual techniques in \cite{huang2020adwords} and \cite{fahrbach2020edge}. We end up with identifying constraints that should be satisfied by our competitive ratio $\Gamma$ and our dual assignments in order to guarantee approximate dual feasibility. We finish the analysis by setting up a factor revealing linear program that maximizes $\Gamma$ given these constraints, and find the optimal solution (up to any desired precision) by appropriate discretization and running a computer program. 

\subsubsection{Description of the algorithm}
\label{sec:primal-dual-alg}
Our OCR-based Primal-Dual (OPD) algorithm is represented in Algorithm~\ref{alg:primal-dual}. At each round, the algorithm selects a set of at most two vertices, and propose these vertices to the \ocs\ algorithm. Consider the round $j$ of the algorithm and let $Q_1, Q_2, \ldots, Q_{j-1}$ be the previous set of vertices proposed by the algorithm, and let $Q_j$ be the new set proposed by the algorithm. We define the marginal contribution of the $Q_j$ as follows.

\begin{comment}
\begin{definition} (Marginal Contribution)
Consider an online algorithm, and let $Q_1, Q_2, \ldots, Q_{j-1}$ be the previous set of vertices proposed by the algorithm, and let $Q_j$ be the new set proposed by the algorithm. The marginal contribution of $Q_j$ is the expected increment in the size of the matching after proposing $Q_j$. We use $\Delta \mu (Q_j)$ to denote the marginal contribution of $Q_j$, and we define it as follows. 
$$\Delta \mu(Q_j) = \sum_{i \in Q_j} \Delta \mu^i_j \,.$$
\end{definition}
\end{comment}

Algorithm \ref{alg:primal-dual} and its analysis relies on the primal dual LP of \omrr. Our algorithm maintains an online primal dual solution of \omrr. In the primal solution we set $x_{i,j} = \Delta \mu^i_j$. Therefore, the size of the primal solution is always equal to the expected size of the matching returned by the algorithm. Let $Q_j$ be the query proposed by the algorithm at the time $j$. Let $i \in Q_j$, be a vertex proposed to \ocs\ at the step $j$ of the algorithm. Proposing this vertex increases the expected size of dual solution by $\Delta \mu^i_j$. Our algorithm maintains a dual solution by increasing dual variables $
\alpha_{i,j}$ and $\beta_j$ by $\Delta \alpha^i_{j}$ and $\Delta \beta^i_j$ respectively. Our algorithm always guarantee that $\Delta \alpha^i_{j} +\Delta \beta^i_j = r_i \cdot \Delta \mu^i_j$. Thus, the size of the dual solution is always equal to the size of primal solution.  Later in this section, we explain how we determine values of $\Delta \alpha^{i}_j$ and $\Delta \beta^i_j$.

At each round, our online primal dual algorithm finds a set of at most two vertices that maximizes $\beta_j$, and proposes this set to \ocs. Note that our primal-dual algorithm does not look at the random decisions of \ocs. Nevertheless, we can assume that we find an online matching which is constructed in the following way. Whenever \ocs\ selects a vertex $i$, we add that vertex to the matching if vertex $i$ is available. This vertex will not be available for the next $d$ time-units, although \ocs\ might pick this vertex again during this time period.

Consider a randomized round $Q_j$, and an offline vertex $i \in Q_j$. Let $j'= \prev^i_j$, thus $Q_{j'}$ is the previous occurrence of $i$. Suppose that $j' > j-d$. Then, a $\gamma$-\ocs\ algorithm guarantees that $i$ is selected with at least probability of  $p^i_j/2 + \gamma (p^i_{j'} - \Delta \mu^i_{j'})$. Recall that $p^i_{j'} - \Delta \mu^i_{j'}$ is the probability that $i$ was available at time $j'$ but is not picked by \ocs.  Let $u^i_j= p^i_{j'} - \Delta \mu^i_{j'}$, and let ${q}^i_j$ be the probability that $i$ was not available at the time $j'$ but became available due to re-usability before or at the time $j$. It is then clear that the probability that $i$ is available at the time $j$ is equal to $u^i_j+ q^i_j$. Therefore,
\begin{align*}
    p^i_j=u^i_j + q^i_j \,.
\end{align*}
We can then rewrite the bound guaranteed by \ocs\ as follows.
\begin{align*}
   p^i_j/2 + \gamma (p^i_{j'} - \Delta \mu^i_{j'}) &= p^i_j/2 + \gamma u^i_j
   = (u^i_j + q^i_j)/2 + \gamma u^i_j
   = (1/2+\gamma) u^i_j + q^i_j/2 \,.
\end{align*}

This shows that whenever vertex $i$ was available  at the time $j'$ but it is not matched by the algorithm, the \ocs\ algorithm guarantees that it will be picked next time with a probability slightly larger than $1/2$. However, if it is the first time that $i$ is proposed after it became available again, the algorithm still picks $i$ with the probability of $1/2$. We set the dual variables based on the values of $u^i_j$ and $q^i_j$. Intuitively, we assign a different weight to dual variables whenever $i$ is proposed to \ocs\ for the first time after becoming available again. For a vertex $i$ in $Q_j$, we set $\Delta \beta^i_j = \beta_1 \cdot r_i \cdot q^i_j/2 + \beta_2 \cdot r_i \cdot (\Delta \mu^i_j - q^i_j/2)$, where $\beta_1 \ge \beta_2$ are constants that we optimize through this section. We also set $\Delta \alpha^i_j = (1-\beta_1) \cdot r_i \cdot q^i_j/2 + (1-\beta_2) \cdot r_i \cdot (\Delta \mu^i_j - q^i_j/2)$. Note that for the consistency of our notations we set $q^i_j= p^i_j$ whenever $\prev^i_j \le j-d$ (In this case we also have $p^i_j=1$).

\SetInd{0.5em}{0.5em}
\begin{algorithm}[H]
		\SetAlgoNoLine
		\For{each online vertex $j$}
		{
		    Find a set $Q_j \subseteq N(j)$ with the size at most two vertices that maximizes $\beta_j$, where $N(j)$ is the set of neighbors of the online vertex $j$\;
		    Propose $Q_j$ to a $\gamma$-\ocs\ algorithm, and update primal-dual variables accordingly \;
		    Let $i$ be the vertex picked by the \ocs. Add $(i,j)$ to the matching if $i$ is available at the time $j$ \;
		}
		\textbf{Return} the matching \;
		\caption{OCR-based Primal-Dual (OPD)}
		\label{alg:primal-dual}
\end{algorithm}
\subsubsection{Primal-Dual Analysis}
\label{sec:primal-dual-analysis}
It is clear from our dual assignments in \Cref{sec:primal-dual-alg} that we always have $\Delta \beta^i_j + \Delta \alpha^i_j = r_i \cdot \Delta \mu^i_j$. Therefore, the size of the primal and dual solutions are always equal. In this section we show that for the proper choices of $\beta_1$ and $\beta_2$, our algorithm satisfies the approximate dual feasibility as it is defined in Lemma \ref{lem:primal-dual}.  We begin our analysis by simple a observation about the dual variables.

\begin{claim}
\label{obs:lb-dual}
Let $i$ be a vertex in $Q_j$, and assume that we have assigned the dual variables according to the rule we discussed earlier. We then have
\begin{align*}
    \sum_{t=\max\{j-d+1,1\}}^{j-1} \alpha_{i,t} \ge (1-\beta_1) \cdot r_i \cdot \varphi^i_j + (1-\beta_2) \cdot r_i \cdot (1-p^i_j - \varphi^i_j) \,,
\end{align*}
where $\varphi^i_j = \min\{1-p^i_j, (1-q^i_j)/2\}$.
\end{claim}
\begin{proof}{Proof.}
Consider a vertex $i$, if we consider time steps $j-d+1, j-d+2, \cdots, j-1$, this vertex has a contribution of $1-p^i_j$ to the size of the matching. Let $T \subseteq [j-d+1,j-1]$ be all the time steps between $j-d+1$ and $j-1$ (inclusive) such that the algorithm has proposed $i$ to \ocs. Now consider a time $t \in T$ where $i$ is proposed to \ocs. If it is the first time that $i$ gets proposed to \ocs\ after becoming available (which happens with the probability of $q^i_t$), we increase $\Delta \alpha^i_t$ by $(1-\beta_1)\cdot r_i \cdot q^i_t/2$. If vertex $i$ gets matched with a probability greater than $q^i_t/2$ at the time $t$ or if it was available at the time $t$ and gets matched in later steps, we add $(1-\beta_2)$ portion of its marginal contribution to $\Delta \alpha^i_t$. Note that $(1-\beta_2)$ is larger than $(1-\beta_1)$ since $\beta_1 \ge \beta_2$. Thus,
\begin{align*}
\sum_{t=\max\{j-d+1,1\}}^{j-1} \alpha_{i,t} = \sum_{t \in T} \alpha_{i,t} \ge (1-\beta_1) \cdot r_i \cdot \sum_{t \in T} q^i_t/2 + (1-\beta_2) \cdot r_i \cdot (1- p^i_j -  \sum_{t \in T} q^i_t/2)
\end{align*}
We now claim that $\sum_{t \in T} q^i_t/2 \le \varphi^i_j$ where $\varphi^i_j=  \min\{1-p^i_j, (1-q^i_j)/2\}$ which proves the claim. It is easy to see that $\sum_{t \in T} q^i_t/2 \le 1-p^i_j$. The reason is that for every time $t \in T$ that vertex $i$ gets proposed to \ocs, it gets matched with the probability of at least $q^i_t/2$, so it will not be available at time $j$ with that probability. We also claim that $\sum_{t \in T} q^i_t/2 \le (1-q^i_j)/2$. The reason is that vertex $i$ can get matched at most once in every time interval of length $d$. Thus, vertex $i$ cannot get available twice in a time interval of length $d$, and we have $\sum_{t \in T} q^i_t + q^i_j \le 1$. This implies that $\sum_{t \in T} q^i_t/2 \le (1-q^i_j)/2$ and completes the proof of the claim.\hfill\Halmos
\end{proof}

Let $(i,j)$ be an edge in the graph. We consider the following cases.

\begin{itemize} [leftmargin=*]
    \item \textbf{Round $j$ is a randomized round, and $i \in Q_j$}: Since the algorithm did not choose to propose $i$ to \ocs\ in a deterministic round, the value of $\beta_j$ is at least $\Delta \beta^i_j$ when $i$ is proposed in a deterministic round. Therefore, we have
    \begin{align}
    \label{eq:an-case1-beta}
        \beta_j &\ge \Delta_{\texttt{deterministic}} \beta^i_j \ge \beta_1 \cdot r_i \cdot  q^i_j/2 + \beta_2 \cdot r_i \cdot (\Delta \mu^i_j - q^i_j/2) \nonumber& \\
        &\ge \beta_1 \cdot r_i \cdot q^i_j/2 + \beta_2  \cdot r_i (p^i_j - q^i_j/2) \,. \\
        \nonumber & \text{(Since in a deterministic round $\Delta \mu^i_j = p^i_j$).}
    \end{align}
    Also, since $i \in Q_j$, and the algorithm proposes $i$ in a randomized round, $\Delta \mu^i_j \ge p^i_j/2$. Therefore,
    \begin{align}
    \label{eq:an-case1-alpha}
        \alpha_{i,j} &= \Delta \alpha^i_j = (1-\beta_1) \cdot r_i \cdot q^i_j/2 + (1-\beta_2)\cdot r_i \cdot (\Delta \mu^i_j - q^i_j/2) \nonumber \\
        &\ge (1-\beta_1) \cdot r_i \cdot q^i_j/2 + (1-\beta_2) \cdot r_i \cdot (p^i_j/2 - q^i_j/2) \,.
    \end{align}
    Therefore, the LHS of the approximate duality constraint is equal to
    \begin{align*}
        &\beta_j + \sum_{t=\max\{j-d+1,1\}}^{j} \alpha_{i,t} = \overbrace{\beta_j}^{\text{By (\ref{eq:an-case1-beta})}} + \overbrace{\alpha_{i,j}}^{\text{By (\ref{eq:an-case1-alpha})}} + \overbrace{\sum\nolimits_{t=\max\{j-d+1,1\}}^{j-1} \alpha_{i,t}} ^{\text{By Observation \ref{obs:lb-dual}}}
        \\
        &\ge r_i \cdot \left(1-p^i_j/2+\beta_2(\varphi^i_j+3p^i_j/2-1) - \beta_1 \varphi^i_j\right) \,.
    \end{align*}
    Thus, the approximate duality constraint reduces to
    \begin{align}
    \label{eq:an-case1-const}
        1-p^i_j/2+\beta_2(\varphi^i_j+3p^i_j/2-1) - \beta_1 \varphi^i_j \ge \Gamma \,.
    \end{align}
    \item \textbf{Round $j$ is a deterministic round, and $i \in Q_j$}: Since $i \in Q_j$, and the algorithm proposes $i$ in a deterministic round, $\Delta \mu^i_j = p^i_j$. Therefore,
    \begin{align}
    \label{eq:an-case2-alpha}
        \alpha_{i,j} &= \Delta \alpha^i_j = (1-\beta_1) \cdot r_i \cdot q^i_j/2 + (1-\beta_2)\cdot r_i \cdot (\Delta \mu^i_j - q^i_j/2) \nonumber \\
        &\ge (1-\beta_1) \cdot r_i \cdot q^i_j/2 + (1-\beta_2) \cdot r_i \cdot (p^i_j - q^i_j/2) \,.
    \end{align}
     Therefore, the LHS of the approximate duality constraint is equal to
    \begin{align*}
        &\beta_j + \sum_{t=\max\{j-d+1,1\}}^{j} \alpha_{i,t} \ge \overbrace{\alpha_{i,j}}^{\text{By (\ref{eq:an-case2-alpha})}} + \overbrace{\sum\nolimits_{t=\max\{j-d+1,1\}}^{j-1} \alpha_{i,t}} ^{\text{By Observation \ref{obs:lb-dual}}}
        \\
        &\ge r_i \cdot \left(1 +\beta_2(\varphi^i_j+q^i_j/2-1) -\beta_1(\varphi^i_j+q^i_j/2)\right) \,.
    \end{align*}
        Thus, the approximate duality constraint reduces to
    \begin{align}
    \label{eq:an-case2-const}
         1 +\beta_2(\varphi^i_j+q^i_j/2-1) -\beta_1(\varphi^i_j+q^i_j/2) \ge \Gamma \,.
    \end{align}
    \item \textbf{Round $j$ is a randomized round, and $i \notin Q_j$}: Let $Q_j = \{i_1, i_2 \}$. Recall that the algorithm picks the set $Q_j$ that maximizes $\beta_j$. Therefore, $\Delta \beta^{i_1}_j$
    is at least $\Delta \beta^i_j$ for when we propose $Q'_j=\{i_2, i\}$ at the round $j$ instead of $Q_j$. Similarly, $\Delta \beta^{i_2}_j$
    is at least $\Delta \beta^i_j$ for when we propose $Q''_j=\{i_1, i\}$ at the round $j$ instead of $Q_j$. Let $j' = \prev^i_j$ be the previous occurrence of $i$. We further divide this case into two different cases.
    \begin{itemize} [leftmargin=12px]
        \item \textbf{When $i$ is not appeared in the last $d$ proposed sets to \ocs\, i.e., $j' \le j-d$:}
         In this case we have $q^i_j= p^i_j = 1$ , and $\Delta \beta^i_j$ is $r_i \cdot \beta_1/2$, when we propose $i$ in a randomized round. Thus, $\Delta \beta^{i_1}_j$ and $\Delta \beta^{i_2}_j$ are both at least $r_i \cdot \beta_1/2$.
        Therefore, the LHS of approximate duality is at least
        \begin{align}
        \label{eq:an-case3-beta-d}
        \beta_j + \sum_{t=\max\{j-d+1,1\}}^{j} \alpha_{i,t} \ge \beta_j \ge 2 (\beta_1/2) \cdot r_i = \beta_1 \cdot r_i  \,,
        \end{align}
        and, the approximate duality constraint reduces to
        \begin{align}
        \label{eq:an-case3-const1}
            \beta_1 \ge \Gamma \,.
        \end{align}
        \item \textbf{The other case is when $j' > j-d$:} In this case \ocs\ guarantees that $\Delta \mu^i_j$ is at least $(1/2+\gamma) u^i_j + q^i_j/2$ when we proposes $i$ in a randomized round. We can then give a lower bound on the value of $\beta_j$ as follows.
        \begin{align}
        \label{eq:an-case3-beta}
            \beta_j &\ge 2 r_i \cdot \big(\beta_1 q^i_j/2+ \beta_2 (1/2+\gamma) u^i_j\big)  = \beta_1 \cdot r_i \cdot q^i_j + \beta_2 \cdot r_i \cdot (1+2 \gamma) u^i_j \,.
        \end{align}
             Therefore, the LHS of the approximate duality constraint is equal to
    \begin{align*}
        &\beta_j + \sum_{t=\max\{j-d+1,1\}}^{j} \alpha_{i,t} \ge   \overbrace{\beta_j}^{\text{By (\ref{eq:an-case3-beta})}} + \overbrace{\sum\nolimits_{t=\max\{j-d+1,1\}}^{j-1} \alpha_{i,t}} ^{\text{By Observation \ref{obs:lb-dual}}}
        \\
        &\ge r_i \cdot \left(1-p^i_j+\beta_2\big((1+2\gamma)u^i_j+p^i_j+\varphi^i_j-1\big) + \beta_1(q^i_j- \varphi^i_j)\right) \,.
    \end{align*}
        Thus, the approximate duality constraint reduces to
    \begin{align}
        \label{eq:an-case3-const2}
       1-p^i_j+\beta_2\big((1+2\gamma)u^i_j+p^i_j+\varphi^i_j-1\big) + \beta_1(q^i_j- \varphi^i_j) \ge \Gamma \,.
    \end{align}
    \end{itemize}
    \item \textbf{Round $j$ is a deterministic round, and $i \notin Q_j$}:  Inspired by \cite{huang2020adwords} we introduce a set of new constraints to enforce the \textit{superiority} of randomized rounds. We then show that this set of new constraints can reduce this case to the previous case. Consider a vertex $i$ and round $j$, and let $\Delta_{\texttt{deterministic}} \beta^i_j$ be the $\Delta \beta^i_j$ when $i$ gets proposed in a deterministic round, and let $\Delta_{\texttt{randomized}} \beta^i_j$ be the $\Delta \beta^i_j$ when $i$ gets proposed in a randomized round. We then add the following set of constraints.
    \begin{align}
    \label{eq:superiority}
    2 \Delta_{\texttt{randomized}} \beta^i_j \ge \Delta_{\texttt{deterministic}} \beta^i_j \,.
    \end{align}
    Note that we can re-write the inequality above as following.
        \begin{align}
    &\beta_1 \cdot r_i \cdot q^i_j + \beta_2 \cdot r_i \cdot (1+2 \gamma) u^i_j \ge \beta_1 \cdot r_i \cdot q^i_j/2 + \beta_2 \cdot r_i \cdot  (p^i_j - q^i_j/2) \nonumber \\
    \label{eq:superiority-expanded}
    & \Rightarrow \beta_1  q^i_j + \beta_2 (1+2 \gamma) u^i_j \ge \beta_1  q^i_j/2 + \beta_2  (p^i_j - q^i_j/2) \,.
    \end{align}
    We now show that given these set of constraints, we can reduce this case to the previous case. Let $Q_j=\{i'\}$ be the deterministic proposal in round $j$. Since the algorithm did not propose the set $Q'_j=\{i,i'\}$ at this round, we can say that $\beta_j = \Delta_{\texttt{deterministic}} \beta^{i'}_j \ge  \Delta_{\texttt{randomized}} \beta^i_j + \Delta_{\texttt{randomized}} \beta^{i'}_j$. Also, by (\ref{eq:superiority}) we have $2 \Delta_{\texttt{randomized}} \beta^{i'}_j \ge \Delta_{\texttt{deterministic}} \beta^{i'}_j$. By combining this with the previous inequality we get $\Delta_{\texttt{randomized}} \beta^i_j \le \Delta_{\texttt{randomized}} \beta^{i'}_j$. Thus,
    \begin{align*}
    \beta_j \ge \Delta_{\texttt{randomized}} \beta^i_j + \Delta_{\texttt{randomized}} \beta^{i'}_j \ge 2\Delta_{\texttt{randomized}} \beta^i_j \,.
    \end{align*}
    This is the exact guarantee that we had for $\beta_j$ in the previous case. Thus, by adding the constraints (\ref{eq:superiority}) we can reduce this case to the previous case.
\end{itemize}

    \paragraph{\textbf{Optimizing the Competitive Ratio:}} We solve an LP whose variables are $\Gamma, \beta_1$, and $\beta_2$, in order to find the optimal competitive ratio.
    
    \begin{align*}
	\begin{array}{rl}
	\max & \Gamma \\
	\text{subject to} \\
	\forall p^i_j, q^i_j \in [0,1],  q^i_j \le p^i_j : & Eqn (\ref{eq:an-case1-const}), (\ref{eq:an-case2-const}),(\ref{eq:an-case3-const1}),(\ref{eq:an-case3-const2}),(\ref{eq:superiority-expanded}) \\
	& \beta_1 \ge \beta_2 \ge 0 \,.
	\end{array}
\end{align*}

 Although the constrainsts of the LP have variables like $\varphi^i_j$ and $u^i_j$, these variable are a function of $p^i_j$ and $q^i_j$. Thus, we only have to consider the constraints of LP for different values of $p^i_j$ and $q^i_j$. In fact it is straightforward to verify that $\Gamma= \beta_1 = \frac{3+ 4 \gamma}{6 + 6 \gamma}$ and $\beta_2 = \frac{1}{2+2 \gamma}$ is a feasible solution to the LP above. Note that we showed that an $1/32$-\ocs\ always exists.  This immediately implies that the competitive ratio of our algorithm is at least $\frac{3+ 4 \cdot 1/32}{6 + 6 \cdot 1/32} = 50/99 \approx 0.50505$.

\section{Conclusion} 
We considered a fundamental generalization of classic bipartite online matching, where resources are reusable and used for an (identical) deterministic duration on every match. We motivated and introduced two new algorithms, Periodic Reranking, that strikes a careful balance between greedy and Ranking by reranking resources on a periodic schedule, and OCR-based primal-dual algorithm, that extends the powerful online correlated selection technique to reusable resources. We established a proof-of-concept result that we can obtain competitive ratios strictly better than $0.5$ with either of our algorithms. On our way to prove these results, we showed various novel structural properties of our algorithms, provided simpler understanding of existing results in the literature, and extended several tools to the case of non-reusable resources.

% Proving a performance guarantee better than 0.5 in the stochastic model remains an open problem for the unit inventory setting.

\paragraph{Open problems} Recall that a $(1-1/e)$ upper bound for OBMRR follows from the fact that OBMRR reduces to online bipartite matching (OBM) for large $d$. As OBMRR generalizes OBM, this upper bound continues to hold. The main open problem left unanswered in our paper is whether one can obtain the competitive ratio of $1-1/e$ for OBMRR.  It is worth noting that for small (but non-trivial) values of $d$, such as $d=2$, the upper bound of $(1-1/e)$ does not apply. Another important open problem is whether one can beat the competitive ratio of $0.5$ when different resources have different usage durations, or when usage durations are stochastic i.i.d. over time. One can also ask whether in these settings one can obtain an OCR algorithm with the types of performance guarantees we have in this paper. Finally consider two candidate algorithms: Ranking, which keeps the same randomized ranking for the resources at each time, and Reranking on Return (RoR), which reranks a resource every time it returns back to the system after a match. Notice that \alg\ generates a new rank more frequently than RoR. In RoR, if a resource is not highly ranked then it may not be matched and its rank is not reset. Consequently, RoR does not fully succeed in untangling dependence between matching decisions at arrivals that are well separated.  Analyzing the performance of either RoR or Ranking remain as challenging open problems.

% \noindent \emph{Reranking on Return (RoR):} %A less aggressive reranking algorithm is as follows: 
% 	Rerank a resource every time it returns back to the system after a match. We call this the RoR algorithm. 
% 	\smallskip
	
% 	Notice that \alg\ generates a new rank more frequently than RoR. In RoR, if a resource is not highly ranked then it may not be matched and its rank is not reset. Consequently, RoR does not fully succeed in untangling dependence between matching decisions at arrivals that are well separated. Similar to Ranking, analyzing the performance of RoR remains a challenging open problem.

% Appendix here
% Options are (1) APPENDIX (with or without general title) or 
%             (2) APPENDICES (if it has more than one unrelated sections)
% Outcomment the appropriate case if necessary
%
% \begin{APPENDIX}{<Title of the Appendix>}
% \end{APPENDIX}
%
%   or 

%
\section*{Acknowledgments.}
The authors thank the anonymous EC referees for their careful comments and feedback. The authors also thank Vineet Goyal and Garud Iyengar for many insightful discussions on this topic.

\bibliographystyle{informs2014.bst}
\bibliography{refs.bib}

\newpage
\begin{APPENDICES}
\renewcommand{\theHsection}{A\arabic{section}}

\section{Continuous Time Model}\label{appx:cont}

 In a continuous time version of the problem, we have an increasing function $a:U\to \mathbb{R}_{+}$ that maps arrivals $j\in U$ to arrival times $a(j)$. The difference $a(j+1)-a(j)$ can be arbitrarily small. If a resource is matched to arrival $j$, it is unavailable during the time interval $(a(j),a(j)+d]$. There can be an arbitrary number of arrivals in this interval. In contrast, in the discrete time model a resource in use is unavailable for exactly $d-1$ arrivals. For simplicity, we focus on the discrete time model but our algorithms and performance guarantees translate as is to the continuous time model. In fact, by adding dummy arrivals and adjusting time scales, we can cast an instance of the continuous model with $|U|$ arrivals and usage duration $d$ into an instance of the discrete model with $|U|^2$ arrivals and usage duration $|U|$. 

Recall that a $(1-1/e)$ upper bound for OBMRR follows from the fact that OBMRR reduces to OBM for large $d$. As OBMRR generalizes OBM, this upper bound continues to hold. However, for small (but non-trivial) values of $d$, such as $d=2$, the upper bound of OBM does not apply. For the continuous time model, this upper bound holds for all non-zero values of $d$. To see this, we use the fact that there can be an arbitrary number of arrivals in any given time duration to fit a hard instance(s) of OBM into a duration smaller than $d$. More generally, this implies that the continuous model is equally hard for all finite and non-zero $d$ i.e., if there is an upper bound of $\alpha$ for duration $d'$, one can scale the hard instances to obtain an $\alpha$ upper bound for any other (finite and non-zero) duration $d$. 

\section{Proof of \Cref{clm:reverse}}
\label{apx:claim-rev}
In order to prove the Claim \ref{clm:reverse}, we first have to define the \textit{dependency graph}.
\paragraph{Dependency Graph.} Let $Q_1, Q_2, \cdots, Q_{\ell}$ be the sequence of the sets  proposed by the outer algorithm, we define the dependency graph $G=(V,E)$ as follow. In the dependency graph $G$ we have a vertex for every set $Q_j$, i.e., $V=\big \{Q_j : j \in [\ell]\big \}$. Also, for every offline vertex $i$ we have an edge $(Q_j,Q_{j'})_i$ between two sets $Q_{j}$ and $Q_{j'}$ if both $Q_j$ and $Q_{j'}$ are randomized rounds, $i \in Q_{j}, Q_{j'}$, and $i$ is not in any set proposed to the algorithm between the times $j$ and $j'$. We use subscript $i$ in the edge $(Q_j,Q_{j'})_i$ to distinguish parallel edges. 

Consider an edge $(Q_j,Q_{j'})_i$ in the dependency graph, and by symmetry assume that  $Q_j$ is proposed to \ocs\ algorithm before $Q_{j'}$. In this case the decision of \ocs\ for rounds $Q_j$ and $Q_{j'}$ is perfectly negatively correlated, and it selects $i$ in exactly one of $Q_j$ or $Q_{j'}$ if (i) round $Q_j$ is a \textit{sender} round, (ii) \ocs\ selects $i_m = i$ in round $Q_j$, (iii) round $Q_{j'}$ is a \textit{receiver} round, (iv) \ocs\ selects $i_m=i$ in round $Q_{j'}$. However, this perfect negative correlation is not always guaranteed, and conditions (i)-(iv) should all be met to have a negative correlation. In this case, we say edge $(Q_j, Q_{j'})_i$ is \textit{realized} and is in the \textit{realized dependency graph}.

For each edge $(Q_j,Q_{j'})_i$ in the dependency graph, there is a probability of $1/16$ that this edge gets realized in the realized dependency graph. The reason is that \ocs\ satisfies conditions (i)-(iv) each with the probability of $1/2$. Moreover, it is easy to verify that every vertex in the realized dependency graph have at most one realized edge. Consider a \textit{sender} round $Q_j$, and let $i_m=i$ be the vertex selected by \ocs\ in order to provide a negative correlation. In this case the corresponding vertex of $Q_j$ in the realized dependency graph can have at most one edge in the form of $(Q_j, Q_{j'})_i$ where $Q_{j'}$ is the next time that vertex $i$ is proposed to the algorithm. We can use a similar argument to show that every \textit{receiver} round has at most one realized edge as well. Therefore, every vertex has a degree of at most one in the realized dependency graph, i.e., realized edges form a matching between vertices of the dependency graph. 

We are now ready to prove Claim \ref{clm:reverse}.

\begin{proof} [Proof of Claim \ref{clm:reverse}]
We call the input sequence $Q_1, Q_2, \cdots, Q_{\ell}$ the \textit{forward} sequence, and the input sequence $Q_{\ell}, Q_{\ell-1}, \cdots, Q_1$ the \textit{backward} sequence. For any offline vertex $i$ we show that expected number of times the algorithm matches $i$ in the forward and backward sequence is the same.  Let $\vec{X}=\langle X_1, X_2, \cdots, X_{\ell} \rangle$ be random variables where $X_j$ is $1$ when \ocs\ selects $i$ when it receives $Q_j$ in the forward sequence, and is $0$ otherwise. Similarly, let $\vec{Y} =\langle Y_1, Y_2, \cdots, Y_{\ell} \rangle$ be random variables where $Y_j$ indicates whether \ocs\ has selected $i$ when it receives $Q_j$ (at the time $\ell-j+1$) in the backward sequence. In order to prove the claim we show that the joint distribution of $\vec{X}$ random variables is the same as $\vec{Y}$ random variables. In other words, let $\vec{b} \in \{0,1\}^{\ell}$ be any binary vector of length $|U|$, we show the following.
\begin{align}
\label{eq:joint}
    \Pr\big[\, \vec{X}=\vec{b}\, \big] = \Pr\big[\, \vec{Y}=\vec{b}\, \big] \,.
\end{align}
We first show how (\ref{eq:joint}) directly implies Claim \ref{clm:reverse}. Let $\mu_{\rhd}$ and $\mu_{\lhd}$ be random variables which indicate the number of times the algorithm matches $i$ in the forward and backward sequences respectively. Consider a vector $\vec{b} \in \{0,1\}^{\ell}$, assuming (\ref{eq:joint}), $\vec{X}$ and $\vec{Y}$ are equal to $\vec{b}$ with the same probability. We first show that if the $\vec{X}$ and $\vec{Y}$ are both equal to a vector $\vec{b}$, then the algorithm matches vertex $i$ exactly the same number of times in forward and backward sequences. In specific we show the following.

\begin{observation}
\label{obs:greedysym}
For any vector $\vec{b} \in \{0,1\}^{\ell}$, we have
\begin{align*}
    \E \big[\, \mu_{\rhd} \ | \vec{X} = \vec{b}\, \big ]  = \E \big[\, \mu_{\lhd} \, | \vec{Y} = \vec{b}\, \big] \,.
\end{align*}
\end{observation}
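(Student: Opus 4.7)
The plan is to reduce both sides of the claimed identity to the same deterministic combinatorial quantity derived from $\vec{b}$, and then appeal to the direction-invariance of greedy $d$-spaced packing on the line. First I would observe that once we condition on $\vec{X}=\vec{b}$, whether vertex $i$ is actually matched in each round is no longer random: in the forward sequence, $i$ is matched at round $j$ iff $b_j = 1$ and $i$ was not matched at any round in $[j-d+1, j-1]$. Scanning $j$ in increasing order and applying this rule recursively produces a uniquely determined subset $M_\rhd(\vec{b}) \subseteq \{j : b_j = 1\}$ consisting of the actual matches, so $\E[\mu_\rhd \mid \vec{X} = \vec{b}] = |M_\rhd(\vec{b})|$ with no expectation actually taken. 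For the backward sequence, the round $Q_j$ is processed at actual time $\ell - j + 1$, and the same reasoning yields $\E[\mu_\lhd \mid \vec{Y} = \vec{b}] = |M_\lhd(\vec{b})|$, where $M_\lhd(\vec{b})$ is obtained by scanning positions $j \in [\ell]$ in \emph{decreasing} order and greedily including each $j$ with $b_j = 1$ whose distance to the most recently included position is at least $d$.

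It therefore suffices to show $|M_\rhd(\vec{b})| = |M_\lhd(\vec{b})|$. Both sets are feasible subsets of $S := \{j : b_j = 1\}$ whose consecutive elements differ by at least $d$. I would show both attain the maximum cardinality of such a subset, using the standard exchange argument for 1D interval packing: if $T^* = \{t_1 < \cdots < t_{m^*}\}$ is any maximum feasible subset and $j_1 < \cdots < j_m$ is the forward greedy output, induction on $k$ yields $j_k \leq t_k$ for every $k \leq m$, so the forward greedy can never terminate before any feasible selection and hence $m \geq m^*$. Applying the same argument with orders reversed gives optimality of $M_\lhd(\vec{b})$. Combining these two facts yields $|M_\rhd(\vec{b})| = |M_\lhd(\vec{b})|$, and the observation follows.

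I do not anticipate a serious technical obstacle. The only care required is the time-reversal correspondence between positions in the forward and backward sequences: a careful bookkeeping step is needed to verify that the availability window $[j-d+1, j-1]$ of the forward rule translates, under the substitution $t = \ell - j + 1$, to the mirror-image window for the backward scan, so that the backward matches really do correspond to right-to-left greedy. After that, the statement collapses to the direction-invariance of maximum-cardinality $d$-spaced greedy selection, which is a classical fact.
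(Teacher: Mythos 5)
Your proof is correct and follows essentially the same route as the paper's: condition on $\vec{X}=\vec{b}$ (resp.\ $\vec{Y}=\vec{b}$) to reduce each side to the deterministic size of a greedy $d$-spaced packing of $\{j: b_j=1\}$, then argue both greedy scans achieve the maximum cardinality of such a packing. The paper simply asserts this last optimality step as ``easy to verify,'' whereas you supply the explicit exchange argument ($j_k \le t_k$ by induction), which is a welcome amount of added detail but not a different approach.
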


\begin{proof}
    Recall that for an offline vertex $i$, the way that the outer algorithm constructs the matching using the decisions of \ocs\ is that it adds $i$ to the matching whenever it is available and get selected by \ocs. Let $\vec{b} = \langle b_1, b_2, \cdots, b_{\ell} \rangle$, and $0 \le j_1 <j_2 < \cdots < j_k$ be all indices such that $b_{j_{k'}}$ is 1 where $k' \in [k]$. By conditioning on $\vec{X} = \vec{b}$, we can say that the \ocs\ algorithm selects $i$ when it receives $Q_{j_1}, Q_{j_2}, \cdots, Q_{j_k}$. In the forward sequence, the algorithm first matches $i$ when it proposes $Q_{j_1}$, and then vertex $i$ gets available after $d$ time units. Therefore, the next time the algorithm matches $i$ is the smallest index $j' \in \{j_1, \cdots, j_k\}$ such that $j' \ge j_1 + d$, and the algorithm continues matching $i$ in the same way. It is easy to verify that the number of times that our algorithm matches $i$ is equal to the maximum number of times that we can match $i$ at the time units  $\{j_1, \cdots, j_k\}$ without violating the reusability constraints (i.e., we should not match $i$ twice within any time interval of length $d$).
    
    Now consider the backward sequence. By conditioning on $\vec{Y}=\vec{b}$, we can say that \ocs\ selects $i$ in exactly the same proposed sets $Q_{j_1}, Q_{j_2}, \cdots, Q_{j_k}$. However, in the backward sequence the first time that algorithm matches $i$ is when it proposes $Q_{j_k}$, and then it cannot match $i$ again when it proposes sets $Q_{[j_k,j_k-d+1]}$. Similar to the forward sequence, it is easy to see that the number of times that the algorithm matches $i$ is equal to the maximum number of times that we can match $i$ without violating the reusability constraints. Therefore, the expected size of the matching in the forward and backward sequences is the same.
\end{proof}

Using the observation above and assuming (\ref{eq:joint}), we can easily prove Claim \ref{clm:reverse} as follows.
\begin{align*}
    \mu^i(Q_{[1,\ell]}) & = \E[\mu_\rhd] \\
    & = \sum_{\vec{b} \in \{0,1\}^{\ell}}  \E \big[\, \mu_{\rhd} \, | \vec{X} = \vec{b}\, \big] \Pr\big[\, \vec{X}=\vec{b}\, \big]
    \\& = \sum_{\vec{b} \in \{0,1\}^{\ell}}  \E \big[\, \mu_{\rhd} \, | \vec{X} = \vec{b}\, \big] \Pr\big[\, \vec{Y}=\vec{b}\, \big] & \text{By (\ref{eq:joint}).}
    \\& = \sum_{\vec{b} \in \{0,1\}^{\ell}}  \E \big[\, \mu_{\lhd} \, | \vec{Y} = \vec{b}\, \big] \Pr\big[\, \vec{Y}=\vec{b}\, \big] & \text{By Observation \ref{obs:greedysym}.}
    \\& = \E[\mu_\lhd] = \mu^i(Q_{[\ell,1]}) \,, 
\end{align*}
which implies the claim. Therefore, in order to show the correctness of Claim \ref{clm:reverse}, we only need to prove Equation (\ref{eq:joint}).

Let $G=(V,E)$ be the dependency graph for the forward sequence. It is easy to see that the dependency graph for the backward sequence is $G$ as well. Recall that in the dependency graph, we have a vertex for every proposed set. Since both forward and backward sequences have the same proposed sets, they have the same vertices in the dependency graph. Furthermore, they have the same set of edges in the dependency graph. As we discussed earlier we add an edge $(Q_j,Q_{j'})_i$ (in the forward sequence) between two sets $Q_{j}$ and $Q_{j'}$ if both $Q_j$ and $Q_{j'}$ are randomized rounds, $i \in Q_{j}, Q_{j'}$, and $i$ is not in any set proposed to the algorithm between $Q_j$ and $Q_{j'}$. In that case $(Q_j,Q_{j'})_i$ will be in the dependency graph of the backward sequence as well, since $i$ is proposed in $Q_j$ and $Q_{j'}$ but not in any set in-between them.

We further claim that the probability distribution of \textit{realized} dependency graph is also the same for forward and backward sequences. Let $E_\rhd$, $E_\lhd$ be the set of realized edges in the forward and backward sequences, respectively. We specifically show that for any set of edges $S \subseteq E$, the realized edges $E_\rhd$ and $E_\lhd$ will be equal to $S$ with the same probability. I.e.,
\begin{align}
\label{eq:probability-realized}
    \Pr\big[E_\rhd = S\big] = \Pr\big[E_\lhd = S\big] \,.
\end{align}

Consider an edge $(Q_j,Q_{j'})_i$ in the dependency graph where $j<j'$. In this case this edge is in the realized dependency graph of the forward sequence if (i) round $Q_j$ is a \textit{sender} round, (ii) \ocs\ selects $i_m = i$ in round $Q_j$, (iii) round $Q_{j'}$ is a \textit{receiver} round, (iv) \ocs\ selects $i_m=i$ in round $Q_{j'}$. As we discussed earlier \ocs\ satisfies each of conditions (i)-(iv) each with the probability of 1/2. Note that conditions (i) and (ii) are related to decisions of the algorithm when it receives $Q_j$, and with the probability of $1/4$ the algorithm satisfies these conditions. Also  conditions (iii) and (iv) are only related to decisions of the algorithm when it receives $Q_{j'}$ which they met with the probability of $1/4$. Another way to interpret the realization of dependency graph is as follow. Vertex $Q_j$ has at most $4$ edges in the dependency graph, and it satisfies the conditions (i)-(iv) for at most one of them each with the probability of $1/4$. The same thing holds for vertex $Q_{j'}$. It satisfies the conditions for at most one of its incident edges each with the probability of $1/4$. Therefore, we get the same distribution of realized dependency graphs if each vertex selects at most one of its incident edges each with the probability of $1/4$, and we add an edge to the realized graph if it has been selected by both of its endpoints.

Now consider the same edge $(Q_j,Q_{j'})_i$ in the dependency graph of the backward sequence. In this case the algorithm receives the proposed set $Q_{j'}$ before $Q_j$. Therefore, this edge is in the realized dependency graph of the backward sequence if (i) round $Q_{j'}$ is a \textit{sender} round, (ii) \ocs\ selects $i_m = i$ in round $Q_{j'}$, (iii) round $Q_{j}$ is a \textit{receiver} round, (iv) \ocs\ selects $i_m=i$ in round $Q_{j}$. We can still interpret the realization of the dependency graph as if each vertex selects at most one of its incident edges each with the probability of $1/4$, and we add an edge to the realized graph if it has been selected by both of its endpoints. Therefore, the probability of distribution of dependency graphs are the same for both forward and backward sequences which implies (\ref{eq:probability-realized}).

As we discussed the distribution the realized dependency graph is the same for both forward and backward sequences. We further claim that for every offline vertex $i$ the proposed sets that \ocs\ selects $i$ also have the same distribution for the forward and backward sequences. As we discussed earlier, edges in the realized dependency graph form a matching, and each edge represents a perfect negative correlation between the decisions of \ocs. 
We first claim that in order to find the joint distribution of rounds in which the algorithm picks $i$, we can only look at the decisions of the algorithm for rounds that $i$ is proposed. If we consider a 
randomized round $Q_j=\{i, i'\}$, the algorithm picks each of $i$ and $i'$ with the probability of $1/2$ when the probability is taken over all random decisions of the algorithm. Algorithm \ref{alg:ocs-sym} just introduces slightly negative correlation between its decision for different rounds. Now consider an edge $(Q_j,Q_{j'})_{i'}$ in the realized dependency graph where $Q_{j'}$ does not contain $i$. Although the decision of the algorithm has a perfect negative correlation the rounds $Q_{j'}$ and $Q_j$, vertex $i$ still gets selected by the algorithm with the probability of $1/2$ at round $Q_j$. Also, the decision of the algorithm for the round $Q_j$ is independent of its decisions for other rounds that $i$ is proposed.

It follows that the distribution of the algorithm for selecting vertex $i$ is as follows. 
Consider a randomized round $Q_j \ni i$. If $Q_j$ has no incident edges in the realized dependency graph, then the algorithm selects $i$ with the probability of $1/2$ independently. Otherwise, let assume that there is an edge between $Q_j$ and $Q_{j'}$ in the dependency graph. If $i \notin Q_{j'}$, the algorithm still selects $i$ he probability of $1/2$ and this decision is independent of other rounds that $i$ is proposed. If $i \in Q_{j'}$, the algorithm selects $i$ in exactly one of $Q_j$ and $Q_{j'}$ each with the probability of $1/2$.
Furthermore, for a deterministic round $Q_j \ni i$, the algorithm selects $i$ with the probability of $1$. It follows that the distribution of the rounds that the algorithm selects $i$ is the same for both forward and backward sequences if they share a same dependency graph. In other words, for every set of edges $S \subseteq E$ in the realized graph and any vector $\vec{b} \in \{0,1\}^{\ell}$, we have
\begin{align}
    \label{eq:samedisQj}
    \Pr\big[\vec{X}= \vec{b} \, | E_\rhd=S\big] = \Pr\big[\vec{Y}= \vec{b} \, | E_\lhd=S\big] \,.
\end{align}
Therefore we have
\begin{align*}
    \Pr\big[\vec{X}= \vec{b}\, \big] &= \sum_{S \subseteq E}  \Pr\big[\vec{X}= \vec{b} \, | E_\rhd=S\big] \Pr\big[E_\rhd=S\big] \\
    &= \sum_{S \subseteq E} \Pr\big[\vec{X}= \vec{b} \, | E_\rhd=S\big] \Pr\big[E_\lhd=S\big] & \text{By (\ref{eq:probability-realized}).} \\
    &= \sum_{S \subseteq E} \Pr\big[\vec{Y}= \vec{b} \, | E_\lhd=S\big] \Pr\big[E_\lhd=S\big] & \text{By (\ref{eq:samedisQj}).} \\
    &=  \Pr\big[\vec{Y}= \vec{b}\, \big] \,,
\end{align*}
which proves (\ref{eq:joint}) and completes the proof of the Claim \ref{clm:reverse}.
\end{proof}

\section{Missing Proofs}\label{appx:missing}
\begin{proof}{Proof of Lemma \ref{lem:primal-dual}.}
$\{\alpha_{i,t}/\Gamma\}_{i,t}$, $\{\beta_j/\Gamma\}_{j}$ is a feasible dual assignment, due to the approximate dual feasibility. Denoting the optimal LP objective value by $\textsc{OPT}$, by weak duality we have $D/\Gamma\geq \textsc{OPT}$. Combined with the reverse weak duality, we have $P\geq \Gamma\cdot\textsc{OPT}$, as desired.\hfill\Halmos
\end{proof}
\medskip

% 	\begin{lemma}[Lemma \ref{devlb}.]
% 		Given $y^1\in[0,1]$ such that $i\in S_{p(j)}(y^1,1)$, we have, 
% 		\begin{enumerate}[label=(\alph*)]
% 			\item $\beta_j(y^1, y^2)\,\geq\,\beta_j(y^1,1)\, \geq\, r_i \left(1-g(y^c_j(y^1))\right)\quad \forall y^2\in[0,1]$.
% 			\item $\sum_{t=\max\{j-d+1,1\}}^{j} %\onee(a(t)-a(t)\leq d)\,
% 			\alpha_{it}(y^1,y^2) \, \geq\, \onee(y^2<y^c_j(y^1))\, r_i g(y^2)\quad \forall y^2\in[0,1].$
% 		\end{enumerate}
% 	\end{lemma}
\begin{proof}{Proof of \Cref{devlb}.} 
		
		\emph{Part(a)}: By definition of $y^c_j(y^1)$, we have $\beta_j(y^1,1)\, \geq\, r_i \left(1-g(y^c_j(y^1))\right)$. It remains to show that, $\beta_j(y^1, y^2)\,\geq\,\beta_j(y^1,1)$. Let $r_t(y^1,y^2)$ denote the reduced price of the resource matched to arrival $t\in U$ in the matching $\alg(y^1,y^2)$. Set $r_t(y^1,y^2)=0$ if $t$ is unmatched.

		Since $g(x)$ is strictly increasing in $x$ (for $\beta>0$), %to prove that $y^c_j(z)\leq y^c_j(1)$, 
		it suffices to show that % for every $t$, %$\forall\, y^1\in(z_1,z_2)$, and $y^2=1$, 
		\[r_t(y^1,y^2)\geq r_t(y^1,1)\,\,\, \forall\, y^2\in[0,1],\,\, t\in k(j).\] 	
		Note that $1-g(1)=0$ for every $\beta$. Therefore, when $y^2=1$ the reduced price of arrival matched to $i$ is 0, same as, if the arrival were unmatched. Combining this observation with the fact that \alg\ matches each arrival greedily based on reduced prices, it suffices to show that,
		\begin{equation}\label{neste}
			S_t(y^1,1)\backslash\{i\}\subseteq S_t(y^1,y^2)\quad \forall y^2\in[0,1],\, t\in k(j).
		\end{equation}

		We prove \eqref{neste} via induction over arrivals in period $k(j)$. Let $t(y^2)$ denote the first arrival in period $k(j)$ where $i$ is available. Matchings $\alg(y^1,1)$ and $\alg(y^1,y^2)$ are identical prior to $t(y^2)$. Thus, $S_{t(y^2)}(y^1,1)= S_{t(y^2)}(y^1,y^2)$. Now, suppose that \eqref{neste} holds for all arrivals $t< t'\in k(j)$. We show that \eqref{neste} holds for arrival $t'$ as well.
		
		For the sake of contradiction, suppose there exists a resource $v\in S_{t'}(y^1,1)\backslash (S_{t'}(y^1,y^2)\cup \{i\})$.	Recall that $S_{t}(y^1,1)\backslash\{i\}\subseteq S_t(y^1,y^2)$ for all $t<t'$. %Since every resource is matched at most once in a period, 
		This occurs only if $v$ is matched to arrival $t'-1$ in $\alg(y^1,y^2)$, where $t'-1\geq t(y^2)$. %Now, $j\in S_{t'}(y^1,1)\backslash\{i\}$ and 
		Since every resource is matched at most once in a period, we have, $S_{t'}(y^1,1)\subseteq S_{t'-1}(y^1,1)$. Thus, $v\in S_{t'-1}(y^1,1)\backslash\{i\}\subseteq S_{t'-1}(y^1,y^2)$ i.e., in $\alg(y^1,1)$, resource $v$ is available but not matched to $t'-1$. This contradicts the fact that \alg\ matches greedily based on reduced prices.
		\smallskip
		
		\noindent \emph{Part(b)}: Let $T'$ denote the set of arrivals in the interval $k(j)\cap \{j-d+1,\cdots,j\}$. We are given that resource is available at some point in this interval. Further, every resource is matched to at most one arrival in $T'$. Consider an arbitrary value $y^2=z\in[0,1]$ and the matching $\alg(y^1,z)$.

		First, if $i$ is matched to an arrival in $T'$, then by definition of dual variables \eqref{theta}, we have that $\sum_{t=\max\{j-d+1,1\}}^{j} \alpha_{it}(y^1,z) \geq r_i g(z)$. On the other hand, if $i$ is not matched to any arrival in $T'$, then, $\alg(y^1,z)$ is identical to $\alg(y^1,1)$ until after arrival $j$. Therefore, arrival $j$ is matched to the same resource in both in $\alg(y^1,z)$ and $\alg(y^1,1)$, despite $i$ being available. Thus, %available but not matched to $t$, which happens only if 
		$r_i(1-g(z))<r_i(1-g(y^c_j(y^1)))$ i.e., $z>y^c_j(y^1)$.
		\hfill\Halmos\end{proof}
		\medskip
		
		\begin{proof}{Proof of Claim \ref{claim:receiversender}.}
    We prove the claim using an induction on the size of $\vec{Q}$. When $|\vec{Q}|=1$, there is no difference between sender and receiver rounds and the claim clearly holds. Now suppose that $|\vec{Q}|>1$. Let $j' = \prev^i_j$, then by (\ref{eq:musender}) we have
    \begin{align*}
                 \mu^i_{\sender}(\vec{Q}_{[j,1]})& \ge \frac{1}{2} \cdot \big(1+ \mu^i(\vec{Q}_{[j-d,1]})\big) + \frac{1}{4} \cdot \mu^i(\vec{Q}_{[j-1,1]}) \nonumber \\
             &+  \frac{1}{8} \cdot \mu^i_{\sender}(\vec{Q}_{[j',1]}) + \frac{1}{16} \cdot \mu^i_{\receiver}(\vec{Q}_{[j',1]}) + \frac{1}{16} \cdot \big(1+ \mu^i(\vec{Q}_{[j'-d,1]})\big)
    \end{align*}
            Note that by Equation (\ref{eq:mu-rs}), we have $\mu^i(\vec{Q}_{[j-1,1]}) = \mu^i(\vec{Q}_{[j',1]}) = \frac{1}{2} \cdot \mu^i_{\sender}(\vec{Q}_{[j',1]}) + \frac{1}{2} \cdot \mu^i_{\receiver} (\vec{Q}_{[j',1]})$. Thus,
            \begin{align*}
            \mu^i_{\sender}(\vec{Q}_{[j,1]})& \ge \frac{1}{2} \cdot (1+ \mu^i(\vec{Q}_{[j-d,1]}))+ \frac{1}{2} \cdot \mu^i(\vec{Q}_{[j-1,1]}) \\
            & + \frac{1}{16} \cdot \big(1+ \mu^i(\vec{Q}_{[j'-d,1]}) - \mu^i_{\receiver}(\vec{Q}_{[j',1]})\big) \\
            & \ge \mu^i_{\receiver}(\vec{Q}_{[j,1]}) + \frac{1}{16} \cdot \big(1+ \mu^i(\vec{Q}_{[j'-d,1]}) - \mu^i_{\receiver}(\vec{Q}_{[j',1]})\big) & \text{By (\ref{eq:mureciver}).} \\
            \end{align*}
            Note that we always have $1+ \mu^i(\vec{Q}_{[j'-d,1]}) \ge \mu^i(\vec{Q}_{[j',1]})$, since $1+ \mu^i(\vec{Q}_{[j'-d,1]})$ describes an event that $i$ got matched at $Q_{j'}$ deterministically. Thus,
             \begin{align*}
            \mu^i_{\sender}(\vec{Q}_{[j,1]})& \ge \mu^i_{\receiver}(\vec{Q}_{[j,1]}) + \frac{1}{16} \cdot \big(\mu^i(\vec{Q}_{[j',1]}) - \mu^i_{\receiver}(\vec{Q}_{[j',1]})\big) \,. \\
            \end{align*}
            Also, by the induction hypothesis we have $\mu_i(\vec{Q}_{[j',1]}) - \mu^i_{\receiver}(\vec{Q}_{[j',1]}) \ge 0$. Thus, using the inequality above we can say that $\mu^i_{\sender}(\vec{Q}_{[j,1]}) \ge  \mu^i_{\receiver}(\vec{Q}_{[j,1]})$. By combining this with (\ref{eq:mu-rs}) we get $\mu^i_{\sender}(\vec{Q}_{[j,1]}) \ge \mu^i(\vec{Q}_{[j,1]}) \ge  \mu^i_{\receiver}(\vec{Q}_{[j,1]})$ which completes the proof for the induction, and proves the claim.\hfill\Halmos
    \end{proof}
    \medskip

 \end{APPENDICES}

\end{document}